\documentclass[journal]{IEEEtran}
\usepackage{indentfirst}
\usepackage{amsfonts}
\usepackage{amsmath}
\usepackage{amsthm}
\usepackage{graphicx}
\usepackage{fancyhdr}
\usepackage{float}
\usepackage{commath}
\usepackage[utf8]{inputenc}
\usepackage{amssymb}
\usepackage{enumerate}
\usepackage[shortlabels]{enumitem}
\usepackage{xcolor}
\usepackage{soul}
\usepackage{listings}
\usepackage[hidelinks,colorlinks=true,linkcolor=blue,citecolor=blue,urlcolor=blue]{hyperref}
\usepackage{bm}
\usepackage{cite}
\usepackage{flushend}
\usepackage{lipsum}
\usepackage{adjustbox}
\usepackage{subcaption}
\usepackage{tabularx}
\usepackage{tikz}
\usepackage{pgfplots}
\usetikzlibrary{patterns}
\usepgfplotslibrary{fillbetween,colorbrewer}
\usetikzlibrary{positioning,shadows,shapes,backgrounds,calc,spy,arrows}
\usetikzlibrary{intersections}
\usepackage{pgfplotstable}
\pgfplotsset{compat=newest}
\usepackage{tcolorbox}

\usepackage{caption}
\captionsetup[figure]{name=Fig.,font=footnotesize,labelformat=simple,labelsep=period}

\newtheorem{theorem}{Theorem}
\newtheorem{corollary}{Corollary}

\definecolor{codegreen}{rgb}{0,0.6,0}
\definecolor{codegray}{rgb}{0.5,0.5,0.5}
\definecolor{codepurple}{rgb}{0.58,0,0.82}
\definecolor{backcolour}{rgb}{0.99,0.99,0.99}

\lstdefinestyle{mystyle}{
    backgroundcolor=\color{backcolour},   
    commentstyle=\color{codegreen},
    keywordstyle=\color{blue},
    numberstyle=\tiny\color{codegray},
    stringstyle=\color{codepurple},
    basicstyle=\footnotesize\ttfamily,
    breakatwhitespace=false,         
    breaklines=true,                 
    captionpos=b,                    
    keepspaces=true,                 
    numbers=left,                    
    numbersep=5pt,                  
    showspaces=false,                
    showstringspaces=false,
    showtabs=false,                  
    tabsize=2
}
 
\lstset{style=mystyle}
\lstset{language=Matlab}

\renewcommand{\Pr}[1]{\mathrm{Pr}\left\{#1\right\}}

\newcommand{\thn}{\mathrm{ele}}
\newcommand{\rin}{\mathrm{rin}}
\newcommand\notsotiny{\@setfontsize\notsotiny\@vipt\@viipt}
\newcommand{\Pcw}{\overline{P}_{\mathrm{cw}}}

\DeclareCaptionFont{smallcaps}{\scshape}  
\captionsetup[table]{
    name=TABLE,
    labelformat=simple,         
    labelsep=newline,           
    justification=centering,    
    font=footnotesize,                 
    textfont=smallcaps          
}

\hyphenation{op-tical net-works semi-conduc-tor}

\begin{document}

\title{Beyond 200 Gb/s/lane: An Analytical Approach to Optimal Detection in Shaped IM-DD Optical Links with Relative Intensity Noise}

\author{Felipe~Villenas,~\IEEEmembership{Student Member,~IEEE,} Kaiquan~Wu,~\IEEEmembership{Member,~IEEE,} Yunus~Can~G\"{u}ltekin,~\IEEEmembership{Member,~IEEE,} Jamal~Riani, and Alex~Alvarado,~\IEEEmembership{Senior Member,~IEEE}
\thanks{This research is part of the project COmplexity-COnstrained LIght-coherent optical links (COCOLI) funded by Holland High Tech $|$ TKI HSTM via the PPS allowance scheme for public-private partnerships. \textit{(Corresponding author: Felipe Villenas.)}}%
\thanks{Felipe Villenas, Kaiquan Wu, Yunus Can G\"{u}ltekin, and Alex Alvarado are with the Information and Communication Theory Lab, Signal Processing Systems Group, Department of Electrical Engineering, Eindhoven University of Technology, 5600 MB Eindhoven, The Netherlands (e-mails: \{f.i.villenas.cortez, k.wu, y.c.g.gultekin, a.alvarado\}@tue.nl).}%
\thanks{Jamal Riani is with Marvell Technology, Santa Clara, CA 95054-3606 USA (e-mail: rjamal@marvell.com).}%
}



\maketitle


\begin{abstract}

Next-generation intensity-modulation (IM) and direct-detection (DD) systems used in data centers are expected to operate at 400 Gb/s/lane and beyond. Such rates can be achieved by increasing the system bandwidth or the modulation format, which in turn requires maintaining or increasing the signal-to-noise ratio (SNR). Such SNR requirements can be achieved by increasing the transmitted optical power. This increase in optical power causes the emergence of relative intensity noise (RIN), a signal-dependent impairment inherent to the transmitter laser, which ultimately limits the performance of the system. In this paper, we develop an analytical symbol error rate (SER) expression for the optimal detector for the IM-DD optical link under study. The developed expression takes into account the signal-dependent nature of RIN and does not make any assumptions on the geometry or probability distribution of the constellation. Our expression is therefore applicable to general probabilistically and/or geometrically shaped systems. Unlike results available in the literature, our proposed expression provides a perfect match to numerical simulations of probabilistic and geometrically shaped systems.

\end{abstract}

\begin{IEEEkeywords}
Analytical error probability, constellation shaping, intensity modulation and direct detection, MAP detection, optical fiber, pulse amplitude modulation, relative intensity noise, symbol error rate. 
\end{IEEEkeywords}

\IEEEpeerreviewmaketitle


\section{Introduction}


\IEEEPARstart{F}{ueled} partly by the exploding demand from artificial intelligence applications, the need for increasing data rates in short-reach data center interconnects (DCI) is unstoppable \cite{zhou2019beyond}. Maintaining hardware costs relatively low is also of importance, as the vast majority of these optical links employ cost-effective transceivers based on intensity-modulation (IM) and direct-detection (DD) \cite{che2023modulation}. With IM-DD, only the amplitude of the transmitted optical light is modulated to carry information. Pulse amplitude modulation (PAM) formats are typically employed in these systems, allowing low-complexity transceivers with low power consumption \cite{zhong2018digital}.

Current commercial IM-DD systems have achieved up to 200 Gb/s/lane \cite{800G_MSA_}. Increasing the symbol rate, which in turn requires an increase in bandwidth, is the most straightforward option for achieving higher data rates (400 Gb/s/lane and beyond). However, achieving such rates is mostly hindered by the constraints imposed by the bandwidth of the electronic components, chip packaging and PCB routing within the transceivers. An alternative approach to improve spectral efficiency that does not require a large bandwidth increase, is to increase the cardinality $M$ of the modulation format PAM-$M$ \cite{berikaa2022net,hossain2021single}. However, modulation formats beyond PAM-4 have a lower noise tolerance due to the higher number of PAM levels. In both cases (increasing bandwidth or modulation format cardinality) requires an increase in optical power to maintain or improve the signal-to-noise ratio (SNR). 

In IM-DD optical links, the increase in signal power leads to signal-dependent noise \cite{szczerba20124}. Without optical amplification, relevant signal-dependent noise sources to take into consideration are the shot noise generated at the receiver \cite{safari2015efficient,van2018optimization}, and the relative intensity noise (RIN) generated from the transmitter laser \cite{baveja201756,villenas2025ecoc}. The latter is the dominant noise source in the high optical power regime \cite[Sec.~II]{szczerba20124}. Consequently, one of the main challenges for achieving 400 Gb/s/lane and beyond, is to address signal-dependent noise. Due to the reasons above, here we focus on IM-DD systems without optical amplification subject to the presence of RIN. 

To compensate for both transceiver impairments and increase throughput in short-reach optical links, an increasing number of digital signal processing (DSP) techniques have been implemented into IM-DD systems \cite{wettlin2020dsp}. An example of a modern DSP technique that aims to increase the throughput in optical links is constellation shaping. This DSP technique has been vastly studied in long-haul coherent optical links, and has been gaining traction for IM-DD systems as well in recent years. The most common form of shaping is probabilistic shaping (PS) \cite{bocherer2015bandwidth}, where the probability distribution of the transmitted symbols is modified to achieve a higher achievable information rate or rate adaptation, for the same transmitted power \cite{fehenberger2016probabilistic}. On the other hand, geometric shaping (GS) changes the geometry of the constellation in order to also achieve higher information rates \cite{qu2019probabilistic,chen2023orthant}. In long-haul coherent optical communications, the inclusion of constellation shaping targets information rate maximization in both the linear and nonlinear operation regimes of the optical channel, see for example \cite{bocherer2019probabilistic,liga2022model} and references therein. In IM-DD systems, constellation shaping has also been studied for short-reach applications. For example, PAM formats with PS were considered in \cite{hossain2021single,ozaydin2024optimization,ozaydin2025optimal,liang2025probabilistic} for information rate maximization, and GS with symbol error minimization in \cite{katz2018level,liang2023geometric,villenas2025ofc}. 

The main metric to predict and assess the performance of an IM-DD link using hard-decision forward error correction (FEC) is the pre-FEC bit error rate (BER). In many works, the pre-FEC BER of a system is either obtained empirically or analytically approximated from the symbol error rate (SER) expressions under certain assumptions, e.g., sufficiently high SNR. The optimal method to detect the transmitted symbols at the receiver is to utilize the maximum a-posteriori probability (MAP) decision rule. For high data rates IM-DD links where the impairments caused by the signal-dependent noise sources are nonnegligible, obtaining an accurate estimation of the pre-FEC BER or SER is critical for performance assessment. Furthermore, when combined with constellation shaping, the optimal decision rule for these systems is in general unknown.

Analytical expressions for the SER of PAM-$M$ signals in the literature, e.g. \cite{safari2015efficient}, often assume equally spaced and equiprobable symbols. Moreover, the fiber channel is typically modeled as an additive white Gaussian noise (AWGN) channel. However, under these assumptions, the developed expressions do not capture the effects of signal-dependent noise and constellation shaping. A few works in the literature go beyond these assumptions. For example, \cite{anttonen2011error} compares different SER expressions for PAM signals with signal-dependent noise in multipath fading channels. One of the SER expressions considers Gaussian distribution for the channel noise. Then, the thresholds' calculation takes into account GS by utilizing the symbols' positions, but assume equally-likely symbols, thus making it incompatible with PS. Similarly, the work in \cite{chagnon2014experimental} performs an experimental study of a short-reach IM-DD system and compares their measurements with an analytical SER expression for PAM signals. Their SER expression takes into account both arbitrary symbol's position and signal-dependent noise, but also assumes equally-likely symbols. Recently, a general MAP decision rule was derived for PAM-4 signals in optical fiber channels in \cite{li2023application}. The authors use a nonequiprobable bit sequence as a test pattern, and thus, a general SER expression is developed that is compatible with PS. However, \cite{li2023application} neglects the signal-dependence of the noise which results in decision thresholds for a regular AWGN channel. The work in \cite{zhou2022unequally} also studies an optical link with PAM-4 where shot noise is the source of signal-dependent noise caused by the usage of a simplified coherent receiver. The authors explore SER minimization with GS by using an analytical expression that has an arbitrary constellation spacing, but assume equiprobable symbols. Table~\ref{tab:sota} shows a summary of available SER expressions in the literature, showing their compatibility with GS and PS. Table~\ref{tab:sota} also shows whether the considered work considers signal-dependent noise, denoted as $\sigma^2(X)$, and the modulation format studied.

In this paper, we derive a general analytical SER expression for an IM-DD system with PAM-$M$ that takes into account the effect of signal-dependent noise caused by RIN from the transmitter laser. Furthermore, we make no assumptions on the geometry nor the distribution of the constellation, and thus, our expression is applicable for GS and/or PS. To the best of our knowledge, this is the first general expression (see Table~\ref{tab:sota}), and thus, provides accurate SERs for the signal-dependent noise channel when also employing constellation shaping techniques.

The rest of the paper is organized as follows. Sec.~\ref{sec:system_model} describes the source of RIN and the IM-DD system under study, as well as the derived channel model for this specific system. Sec.~\ref{sec:symbol_error} details the derivation of the analytical SER expression for the considered signal-dependent noise channel. Sec.~\ref{sec:results} includes numerical results, where the developed expression is compared against the literature (e.g., \cite{chagnon2014experimental}) under different constellation shaping scenarios. Finally, Sec.~\ref{sec:conclusions} presents the conclusions of this work.

\newcolumntype{C}{>{\centering\arraybackslash}X}

\begin{table}[!t]
    \centering
    \caption{Comparison of SER Expressions in the Literature}
    \begin{tabularx}{\columnwidth}{>{\centering\arraybackslash}p{1.9cm}CCCC}
        \hline \hline
        Work & GS & PS & $\sigma^2(X)$ & Modulation\\
        \hline
        \cite{zhou2022unequally} & Yes & No & Yes & PAM-4\\
        \cite{anttonen2011error,chagnon2014experimental} & Yes & No & Yes & PAM-$M$\\
        \cite{li2023application} & Yes & Yes & No & PAM-$M$\\
        \textbf{This work} & \textbf{Yes} & \textbf{Yes} & \textbf{Yes} & \textbf{PAM-}$\boldsymbol{M}$\\
        \hline \hline
    \end{tabularx}
    \label{tab:sota}
\end{table}
\section{System Model}
\label{sec:system_model}

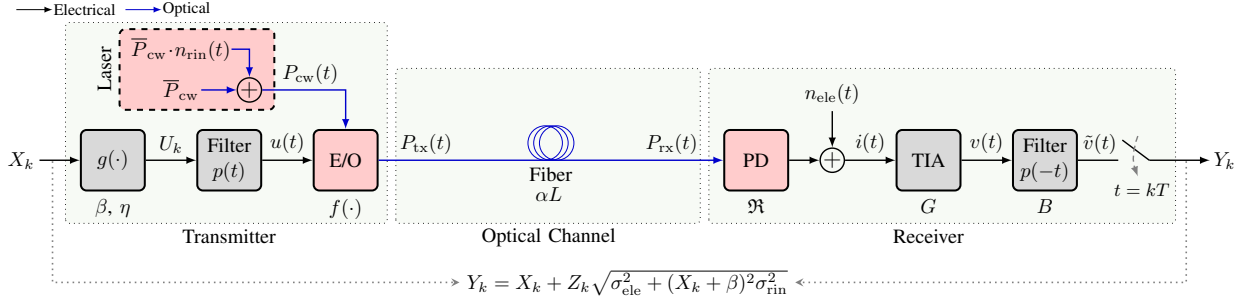
\begin{figure*}[!ht]
    \centering
    \begin{adjustbox}{width=0.9\linewidth,trim={1.95cm 0 0.25cm 0}, clip}
        \definecolor{ashgrey}{rgb}{0.75, 0.75, 0.75}
\definecolor{antiquebrass}{rgb}{0.98, 0.81, 0.69}
\definecolor{brilliantlavender}{rgb}{0.96, 0.73, 1.0}
\definecolor{bgcolor}{rgb}{0.97, 0.98, 0.96}
\definecolor{bgcolor2}{rgb}{0.99, 0.93, 0.89}
\newcommand{\OColor}{blue!80!black}

\tikzstyle{block} = [draw, line width = 1pt, fill=black!20, rectangle, minimum height=30pt, rounded corners=0.1cm, text width=2.5em,align=center]

\tikzstyle{block_wide} = [draw, line width = 1pt, fill=black!20, rectangle, minimum height=30pt, rounded corners=0.1cm, text width=3.5em,align=center]

\tikzstyle{block_wide2} = [draw, line width = 1pt, fill=black!20, rectangle, minimum height=20pt, rounded corners=0.1cm, text width=3em,align=center]

\tikzstyle{block_wide3} = [draw, line width=1pt, fill=black!20, rectangle, minimum height=40pt, rounded corners=0.1cm, text width=7.0em,align=center]

\tikzstyle{block2} = [draw, line width = 1pt, fill=black!20, rectangle, minimum height=30pt, minimum width=30pt, rounded corners=0.1cm, text width=5em,align=center]

\tikzstyle{Cir} = [draw, circle,  minimum size=2.15em]
\tikzstyle{circ} = [circle, draw, minimum size=10pt, text centered,inner sep=0pt]

\newcommand{\shotNoise}{0}  
\newcommand{\lw}{0.7}       

\begin{tikzpicture}[auto, node distance=1 cm,>=to,line width=\lw]

    \node [coordinate] (input) {};  
    \node [block, right = 2em of input, fill = gray!30] (DAC) {$g(\cdot)$};
    \node [block, right = 2.5em of DAC, fill = gray!30] (pulse) {Filter $p(t)$};
    
    \node [below=0em of DAC](){$\beta$, $\eta$};
    \node [block,right = 2.5em of pulse, fill = red!20] (EO) {E/O};
    \node [below=0em of EO](){$f(\cdot)$};

    \node[block_wide3,dashed,above=0.9em of pulse,xshift=-16pt,fill=red!20] (laser) {};
    \node[left=0.7em of laser,rotate=90,xshift=1.5em](){Laser};
    \node[circ,above=1.3em of pulse,xshift=10pt](sum1){\large $+$};
    \node[left=1.5em of sum1](P_A){$\Pcw$};
    \node[above=0.2em of P_A](P_A_rin){$\Pcw\!\cdot\!n_{\mathrm{rin}}(t)$};

    \draw [\OColor,solid,line width=0.5pt,opacity=1] ($(EO.east)+(2.9,0.3)$) circle (3mm);
    \draw [\OColor,solid,line width=0.5pt,opacity=1] ($(EO.east)+(3.0,0.3)$) circle (3mm);
    \draw [\OColor,solid,line width=0.5pt,opacity=1] ($(EO.east)+(3.1,0.3)$) circle (3mm);
    \node[right=2.6cm of EO, yshift=-1.7em](fiber_mid){$\alpha L$};
    \node[right=0.7em of EO, yshift=0.8em](){$P_{\mathrm{tx}}(t)$};

    \node [block, right = 6cm of EO, fill = red!20] (OE) {PD};
    \node[left=0.9em of OE, yshift=0.8em](Prx){$P_{\mathrm{rx}}(t)$};
    \node[below=0em of OE](){$\mathfrak{R}$};

    \ifthenelse{\equal{\shotNoise}{1}}{
        \node[circ,fill=gray!20,right=1.5em of OE](sum2){\large $+$};
        \node[circ,fill=gray!20,right=1.5em of sum2](sum3){\large $+$};
        \node[below=1.8em of sum2](noiseshot){$\sqrt{\mathfrak{R}P_{\mathrm{rx}}(t)}n_{\mathrm{sh}}(t)$};
        \draw [draw,-latex] (noiseshot) -- (sum2);
        \draw [draw,-latex] (OE) -- (sum2);
        \draw [draw,-latex] (sum2) -- (sum3);
    }{
        \node[circ,right=1.5em of OE](sum3){\large $+$};
        \draw [draw,-latex] (OE) -- (sum3);
    }

    \node [above=1.8em of sum3](noise){$n_{\mathrm{ele}}(t)$};
    \node [block, right = 2.5em of sum3, fill = gray!30] (TIA) {TIA};
    \node [below=0em of TIA](){$G$};
    \node [block, right = 2.5em of TIA, fill = gray!30] (ADC) {Filter $p(-t)$};
    \node [below=0em of ADC](){$B$};
    \node [right=-0.1em of ADC,yshift=8pt](){$\tilde{v}(t)$};
    \node [right=2.0em of ADC] (sampling) {};
    \node [below=0.4em of sampling, xshift=0.8em] () {\small{$t=kT$}};
    \node [right = 0.8em of sampling] (output) {};

    \draw [draw,-latex] (input) -- node[left,text width=2.2em, align = left]{$X_k$}(DAC);
    \draw [draw,-latex] (DAC) -- node[midway,above]{$U_k$}(pulse);
    \draw [draw,-latex] (pulse) -- node[midway,above]{$u(t)$}(EO);
    
    \draw [draw,-latex, color=\OColor] (sum1) -| node[midway,right,xshift=-3.5em,yshift=0.8em]{\textcolor{black}{$P_{\mathrm{cw}}(t)$}}(EO);
    \draw [draw,-latex, color=\OColor] ($(P_A)+(1.0em,0)$) -- (sum1);
    \draw [draw,-latex, color=\OColor] ($(P_A_rin)+(2.5em,0)$) -| (sum1);
    
    \draw [draw,-latex, color=\OColor] (EO) -- node[midway,below]{\textcolor{black}{Fiber}}(OE);
    \draw [draw,-latex] (noise) -- (sum3);

    \draw [draw,-latex] (sum3) -- node[midway,above]{$i(t)$}(TIA);
    \draw [draw,-latex] (TIA) -- node[midway,above]{$v(t)$}(ADC);
    \draw[draw] (ADC) -- (sampling);
    \draw [draw] ($(sampling.west)+(0.2em,1em)$) -- ($(output.west)$);
    \draw [draw,-latex] ($(output.west)$) --  node[right,text width=2.4em, align = right]{$Y_k$}($(output.west)+(3em,0)$);

    \draw[->, >=latex, densely dashed, gray, bend right=20] ($(sampling)+(0.6em,1.2em)$) to ($(sampling)+(0.6em,-0.8em)$);


    \node[below=5em of Prx,xshift=-2.2em](ch_eq){$Y_k=X_k+Z_k\sqrt{\sigma_{\mathrm{ele}}^2 + (X_k+\beta)^2\sigma_{\mathrm{rin}}^2}$};
    \draw[draw,-stealth,dotted,thick,color=gray] ($(DAC.south)+(-1.05,1.5em)$) |- (ch_eq);
    \draw[draw,-stealth,dotted,thick,color=gray] ($(ADC.south)+(7.0em,1.5em)$) |- (ch_eq);

    \begin{pgfonlayer}{background}
        \draw[dotted,fill=bgcolor,rounded corners=2pt] ($(DAC.west)+(-0.25,-1.1)$) rectangle ($(EO.east)+(+0.15,2.4)$);     
        \draw[dotted,fill=bgcolor,rounded corners=2pt] ($(EO.east)+(+0.30,-1.1)$) rectangle ($(OE.west)+(-0.40,1.6)$);     
        \draw[dotted,fill=bgcolor,rounded corners=2pt] ($(OE.west)+(-0.25,-1.1)$) rectangle ($(ADC.east)+(+1.7,1.6)$);     

        \node[below=1.5em of pulse](){Transmitter};
        \node[below=1.5em of TIA](){Receiver};
        \node[below=0.6em of fiber_mid](){Optical Channel};
    \end{pgfonlayer}
    

    \node[above = 2.5cm of input, xshift=-1.5em] (elec_0){};
    \node[above = 2.5cm of input, xshift=1em] (elec_1){};
    \node[right = 1.3cm of elec_1, xshift=-1.5em] (opt_0){};
    \node[right = 1.3cm of elec_1, xshift=1em] (opt_1){};
    
    \draw[draw,-latex] (elec_0) -- node[right,text width=4.3em, align=left,xshift=0.6em]{\footnotesize Electrical}(elec_1);
    \draw[draw,-latex,\OColor] (opt_0) -- node[right,text width=3.5em, align=left,xshift=0.6em]{\footnotesize \textcolor{black}{Optical}}(opt_1);
 
\end{tikzpicture}
    \end{adjustbox}
    \caption{General structure of the considered IM-DD optical fiber link. The transmitter consists of a laser modulated by an electrical-to-optical (E/O) modulator. The optical channel only considers fiber losses. The receiver consists of a photodiode (PD) followed by a transimpedance amplifier (TIA), matched filter $p(-t)$ and sampling.}
    \label{fig:imdd_link}
\end{figure*}

We consider an IM-DD link without optical amplification, as shown in Fig.~\ref{fig:imdd_link}. We also consider that all the components have an ideal frequency response and none is bandwidth-limited. Hence, no inter-symbol interference is introduced to the signals, as discussed in \cite[Sec.~IV]{che2023modulation}. In the following sections, the components of the optical system in Fig.~\ref{fig:imdd_link} are explained in more detail. 

\subsection{Transmitter} \label{sec:transmitter}

At the transmitter side, a sequence of PAM-$M$ symbols $X_k\in\mathcal{X}$ is generated randomly. To mitigate the nonlinear transfer function of the eletrical-to-optical (E/O) modulator, we utilize a pre-distortion function $g(\cdot)$ such that it generates $U_k=g(X_k)$\footnote{Notation convention: Capital letters $X$ denote a random variable (RV), while their lowercase version $x$ denotes the RV realization. $\mathbb{E}[\cdot]$ denotes expectation. Calligraphic letters denote sets, $f(t)$ with $t\in\mathbb{R}$ denotes a continuous-time function, and $f_k$ with $k\in\mathbb{Z}$ denotes a discrete-time sequence.}. The symbols $U_k$ are then passed through an ideal filter $p(t)$ to generate the analog waveform
\begin{equation} \label{eq:U_k}
    u(t) = \sum_{k=0}^{\infty}U_k p(t-kT) = \sum_{k=0}^{\infty}g(X_k)p(t-kT),
\end{equation}
where $p(t)$ is a raised cosine pulse waveform, and $T$ is the symbol period defined by the baudrate $R_s=1/T$. 

The signal $u(t)$ is used as the driving signal of the E/O modulator, whose function is to modulate the intensity $P_{\mathrm{cw}}(t)$ of an O-band laser in continuous-wave (CW) operation. The baseband output optical power from the laser can be modeled as \cite[Eq.~(2.2)]{seimetz2009high}
\begin{equation} \label{eq:Pcw}
    P_{\mathrm{cw}}(t) = \Pcw(1 + n_{\rin}(t)),
\end{equation}
where $\Pcw$ is the laser average optical power, and $n_{\rin}(t)$ is RIN with zero mean and power spectral density (PSD) $S_{\rin}(f)$. The laser output is mostly dominated by the stimulated emission of photons. However, a small part of these photons is generated by spontaneous emission, which results in optical intensity noise $n_{\rin}(t)$ due to the incoherent nature of these photons~\cite[Ch.~3.3.6]{hui2019introduction}. This noise manifests itself as random fluctuations around the average intensity of the signal, as shown in Fig.~\ref{fig:rin_example0}.

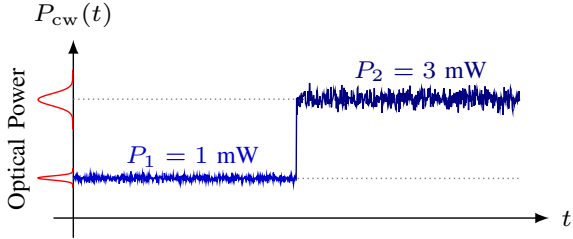
\begin{figure}[!t]
    \centering
    \begin{adjustbox}{width=0.9\columnwidth}
        \begin{tikzpicture}[scale=1]

    \draw[draw,-latex] (-0.2,0) -- (4.8,0) node[right] {\scriptsize $t$};
    \draw[draw,-latex] (0,-0.2) -- (0,1.8) node[above] {\scriptsize $P_{\mathrm{cw}}(t)$};;
    \node[rotate=90]() at (-0.55,0.8)  {\scriptsize Optical Power};
    
    \draw[densely dotted, gray] (0,0.4) -- (4.5,0.4);
    \node[blue!80!black]() at (1.2,0.62) {\scriptsize $P_{1}=1$ mW};
    \draw[blue!80!black, thin] plot file{FigTikz/data_txt/laser_noise_example3.txt};
    \draw[red, thin, domain=-0.1:0.1, samples=50, rotate=90, xshift=11.5,yshift=0.18] plot (\x,{0.35*exp(-\x*\x/(2*4e-4))});
    
    \draw[densely dotted, gray] (0,1.2) -- (4.5,1.2);
    \node[blue!50!black]() at (3.5,1.48) {\scriptsize $P_{2}=3$ mW};
    \draw[blue!50!black, thin] plot file{FigTikz/data_txt/laser_noise_example4.txt};
    \draw[red, thin, domain=-0.3:0.3, samples=50, rotate=90, xshift=34,yshift=0.18] plot (\x,{0.35*exp(-\x*\x/(2*0.0036))});

\end{tikzpicture}
    \end{adjustbox}
    \caption{Illustration of the optical power emitted by an O-band laser in CW operation with RIN. An example is given for two different power levels $P_1=1$ mW and $P_2=3$ mW.}
    \label{fig:rin_example0}
\end{figure}

This intensity noise is the RIN, and is defined as the ratio between the noise power and $\Pcw$, hence the name ``relative intensity noise''. This ratio is a commonly used parameter for laser specification, and is measured in units of [1/Hz] or [dB/Hz]. The measurement of this parameter is done in the electrical domain after detection with a photodiode (PD). Therefore, because of square-law detection in the PD, the power of the generated electrical noise signal is proportional to the square of the laser output optical power $\Pcw^2$. The laser RIN parameter is defined as \cite[Ch.~6.5.2]{agrawal2013semiconductor}
\begin{equation} \label{eq:rin}
    S_{\rin}(f) \triangleq \int_{-\infty}^{\infty}R_\rin(\tau)e^{-j2\pi f\tau}\mathrm{d}\tau
\end{equation}
where $R_\rin(\tau)=\mathbb{E}[n_\rin(t+\tau)n_\rin(t)]$ is the autocorrelation function of $n_\rin(t)$. Assuming an additive white Gaussian noise model for the RIN \cite[Ch.~3.3.6]{hui2019introduction}, $S_{\rin}(f) = \mathrm{RIN}$, i.e., a constant $\forall f$. Hence, after band-limiting the noise with a low pass filter, from \eqref{eq:Pcw} we have that the variance of the laser power is
\begin{equation} \label{eq:laser_variance}
    \begin{split}
    \mathbb{E}[(P_{\mathrm{cw}}(t)-\Pcw)^2] &= \mathbb{E}[(\Pcw\cdot n_\rin(t))^2]\\
    &=\Pcw^2\int_{0}^B \!\!S_\rin(f)\mathrm{d}f = \Pcw^2\cdot\mathrm{RIN}\cdot B
    \end{split}
\end{equation}
where $B$ is the electrical bandwidth of the band-limiting filter. In the example illustration from Fig.~\ref{fig:rin_example0} it can be observed that the variance of the intensity noise is larger for high power than it is for lower optical power. Typical laser RIN values range from $-160$ to $-130$ [dB/Hz] \cite{seimetz2009high}.

\begin{figure}[!t]
    \centering
    \definecolor{my_green}{rgb}{0.55, 0.71, 0.0}    
\definecolor{my_purple}{rgb}{0.5, 0, 0.5}

\begin{tikzpicture}
    \begin{semilogyaxis}[
    width=0.95\columnwidth,  
    height=2.4in,
    font=\footnotesize,
    xmin=-20, xmax=10,
    ymin=8e-14, ymax=1e-7,
    xlabel={Transmitted Optical Power [dBm]},
    ylabel={Noise Variance $[W]$},
    ytick pos=left,
    xtick pos=bottom,
    grid=major,
    grid style = {solid,lightgray!75},
    legend style = {legend pos=south east, font=\scriptsize, legend cell align=left, row sep=-0.5ex},
    ytickten={-14,-13,...,8},
    ]
        \addplot[name path=lregion, green!50!black, opacity=0.5, forget plot] coordinates{(-12,1e-15) (-12,1e-7)};
        \addplot[name path=rregion, green!50!black, opacity=0.5, forget plot] coordinates{(6,1e-15) (6,1e-7)};
        \addplot[green!30, opacity=0.3, forget plot] fill between[of=lregion and rregion];

        \addplot[color=blue, solid, line width=1pt] table[x={ROP},y={s2th}] {FigTikz/data_txt/noise_dependency3.txt};
        \addlegendentry{Thermal Noise}
        
        \addplot[color=my_purple, solid, line width=1pt] table[x={ROP},y={s2shot}] {FigTikz/data_txt/noise_dependency3.txt};
        \addlegendentry{Shot Noise}
        
        \addplot[color=red, solid, line width=1pt] table[x={ROP},y={s2rin}] {FigTikz/data_txt/noise_dependency3.txt};
        \addlegendentry{RIN}
        
        \addplot[color=black, dashed, line width=1pt] table[x={ROP},y={s2total}] {FigTikz/data_txt/noise_dependency3.txt};
        \addlegendentry{Total Noise}

    \end{semilogyaxis}

\end{tikzpicture}
    \caption{Noise contributions due to thermal noise, shot noise, and RIN for a system bandwidth of $100$ GHz, thermal noise equivalent power of \mbox{$22$ pA$/\sqrt{\textrm{Hz}}$}, photodiode responsivity \mbox{$\mathfrak{R}=0.5$ A/W}, and \mbox{$\mathrm{RIN}=-140$ dB/Hz}. The shaded area corresponds to the region of interest for high-speed IM-DD.}
    \label{fig:rin_example}
\end{figure}
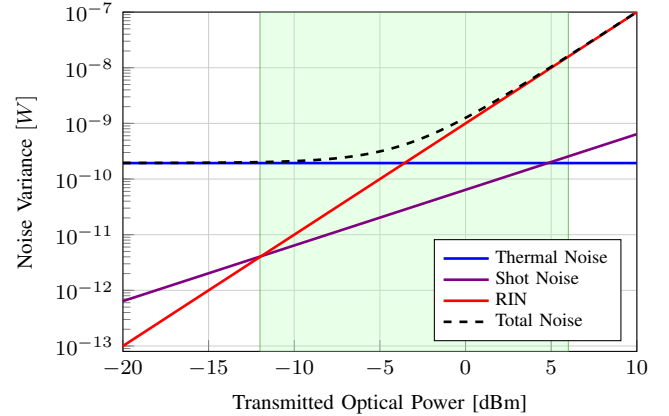

In this paper, we focus on the main source of signal-dependent noise in this IM-DD setup from Fig.~\ref{fig:imdd_link}, i.e., the transmitter laser generating intensity noise. The main noise contributions mentioned in the introduction are plotted in Fig.~\ref{fig:rin_example} as a function of the transmitted optical power. Thermal noise has a constant variance because it is independent of the transmitted optical power. In contrast, shot noise and RIN variance depend linearly and quadratically on the transmitted optical power, respectively \cite[Eq.~(10)]{szczerba20124}. Furthermore, it can be observed that the thermal noise is dominant for low optical power values, whereas for high optical power values, the RIN dominates over both thermal and shot noise. Due to this noise dominance, we focus mainly on the effect of RIN when working at high optical power values, which is the case for IM-DD systems targeting data rates of 400 Gb/s and beyond.

Next, we have that the output optical power of the modulator can be described as
\begin{align} \label{eq:P_tx1}
    P_{\mathrm{tx}}(t) &= P_{\mathrm{cw}}(t)f\left(u(t)\right)
\end{align}
where $f(\cdot)$ is the nonlinear transfer function of the E/O modulator. Common modulators used in high-speed IM-DD systems are electro-absorption modulators, thin-film lithium niobate modulators, and Mach-Zehnder modulators \cite[Sec.~II]{che2023modulation}. As an example, the red curve in Fig.~\ref{fig:mzm_tf} shows the transfer function $f(\cdot)$ of a Mach-Zehnder modulator (MZM). From this figure, it can be observed that the optical power transfer function has a periodicity given by the MZM parameter $V_\pi$ [V]. As mentioned previously, to deal with the nonlinearity of $f(\cdot)$, we consider that the target optical PAM power levels are mapped to the input voltage levels by the inverse modulator transfer function $f^{-1}(\cdot)$ \cite{liang2023geometric,yang2023digital}. By doing so, the pre-distortion function $g(\cdot)$ from \eqref{eq:U_k} is set to
\begin{equation} \label{eq:u_k}
    g(X_k) = f^{-1}\left(\frac{\eta}{\Pcw}(X_k+\beta)\right),
\end{equation}
where $\eta$ is the electro-optical conversion factor in [W/V], and \mbox{$\beta\geq |\min\{\mathcal{X}\}|$} is a bias required to satisfy the nonnegativity constraint for IM and where $\mathcal{X}$ is the set of bipolar constellation points.
\begin{figure}[!t]
    \centering
    \definecolor{my_green}{rgb}{0.55, 0.71, 0.0}
\definecolor{my_purple}{rgb}{0.5, 0, 0.5}

\pgfmathsetmacro{\xb}{0.3}
\pgfmathsetmacro{\xe}{0.7}
\pgfmathsetmacro{\yb}{0.5+0.5*cos(180*\xb)}
\pgfmathsetmacro{\ye}{0.5+0.5*cos(180*\xe)}
\pgfmathsetmacro{\sb}{-0.85}
\pgfmathsetmacro{\ssb}{0.77}

\begin{tikzpicture}
    \begin{axis}[
    width=0.95\linewidth,  
    height=1.6in,
    xmin=0, xmax=2,
    ymin=-0.05, ymax=1.05,
    xlabel={Driving Voltage},
    xlabel style={xshift=30pt},
    ylabel={Optical Power},
    ylabel style={yshift=-4pt},
    xtick={0, 0.5, 1, 1.5, 2},
    xtick pos=bottom,
    ytick pos=left,
    xticklabels={$0$,$V_\pi/2$,$V_\pi$, $3V_\pi/2$, $2V_\pi$},
    ytick={0,\ye,\yb,1},
    yticklabels={$0$,$P_{\mathrm{min}}$,$P_{\mathrm{max}}$,},
    grid=major,
    grid style = {lightgray!75},
    font=\footnotesize,
    clip=false,
    ]
   
        \addplot[domain=0:2, samples=100, color=red, solid, line width=1.0pt]{0.5+0.5*cos(180*x)};


        \draw[gray,densely dotted, thick] (axis cs:\xb,\yb) -- (axis cs:\xb,-0.85);
        \draw[gray,densely dotted, thick] (axis cs:\xe,\ye) -- (axis cs:\xe,-0.85);

        \draw[black, line width=0.8pt] (axis cs:\xb,\sb) -- (\xb,\sb+0.1) -- (\xe,\sb+0.1) -- (\xe,\sb+0.2) -- (\xb,\sb+0.2);
        \draw[black, line width=0.8pt] (axis cs:\xb,\sb+0.2) -- (\xb,\sb+0.3) -- (\xe,\sb+0.3) -- (\xe,\sb+0.4) -- (\xb,\sb+0.4);
        \draw[black, line width=0.8pt] (axis cs:\xb,\sb+0.4) -- (\xb,\sb+0.5) -- (\xe,\sb+0.5) -- (\xe,\sb+0.6);
        
        \node[rotate=90] (RF) at (axis cs:\xb-0.08,\sb+0.35) {\footnotesize Electrical $u(t)$};    
        
        \draw[gray,densely dotted, line width=0.6pt] (axis cs:\xb,\yb) -- (1.75,\yb);
        \draw[gray,densely dotted, line width=0.6pt] (axis cs:\xe,\ye) -- (1.75,\ye);

        \draw[blue, line width=0.8pt] (axis cs:\ssb,\ye) -- (\ssb+0.1,\ye) -- (\ssb+0.1,\yb) -- (\ssb+0.2,\yb) -- (\ssb+0.2,\ye);
        \draw[blue, line width=0.8pt] (axis cs:\ssb+0.2,\ye) -- (\ssb+0.3,\ye) -- (\ssb+0.3,\yb) -- (\ssb+0.4,\yb) -- (\ssb+0.4,\ye);
        \draw[blue, line width=0.8pt] (axis cs:\ssb+0.4,\ye) -- (\ssb+0.5,\ye) -- (\ssb+0.5,\yb) -- (\ssb+0.6,\yb);

        \node[blue] (Opt) at (axis cs:\ssb+0.3,\yb+0.08) {\footnotesize Optical $P_{\mathrm{tx}}(t)$};

        \draw[>=latex,<->,line width=0.6] (1.55,\ye) -- (1.55,\yb);
        \node[] (OMA) at (axis cs:1.72,0.5) {\footnotesize OMA};
        \node[red] (function) at (axis cs:0.1,0.85) {\small $f(\cdot)$};

    \end{axis}

\end{tikzpicture}
    \caption{Mach-Zehnder modulator transfer function. The difference between power levels at the output is defined as the optical modulation amplitude (OMA).}
    \label{fig:mzm_tf}
\end{figure}
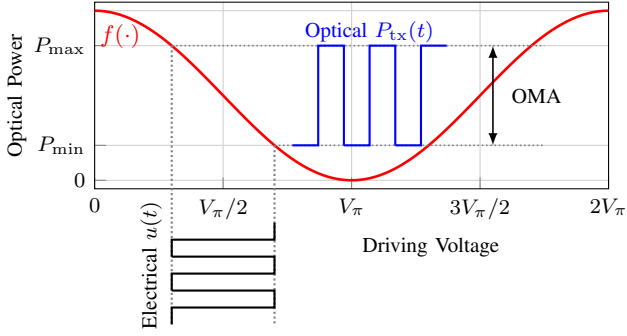

At the output of the modulator, we define the two outermost power levels $P_{\mathrm{max}}$ and $P_{\mathrm{min}}$ as
\begin{equation} \label{eq:P_max}
    P_{\mathrm{max}}\triangleq\mathbb{E}[\max\{P_{\mathrm{tx}}(kT)\}]=\eta(\max\{\mathcal{X}\}+\beta),
\end{equation}
\begin{equation} \label{eq:P_min}
    P_{\mathrm{min}}\triangleq\mathbb{E}[\min\{P_{\mathrm{tx}}(kT)\}]=\eta(\min\{\mathcal{X}\}+\beta),
\end{equation}
which correspond to the maximum and minimum transmitted optical power levels that carry information from the constellation $\mathcal{X}$. This corresponds to the amplitudes of the signal $P_{\mathrm{tx}}(t)$ at the time instants $t=kT$.\footnote{The definitions for $P_{\mathrm{max}}$ and $P_{\mathrm{min}}$, and the inclusion of expectations in those definitions will be explained further in Sec.~\ref{sec:receiver}.} With \eqref{eq:P_max} and \eqref{eq:P_min}, the optical signal is characterized by the following two metrics: the optical modulation amplitude (OMA) \cite{yu2014trade} defined as
\begin{equation} \label{eq:oma}
    \mathrm{OMA} \triangleq P_{\mathrm{max}} - P_{\mathrm{min}}=\eta\left(\max\{\mathcal{X}\}-\min\{\mathcal{X}\}\right),
\end{equation}
and the extinction ratio $\varepsilon_r>1$ \cite{che2023modulation} defined as
\begin{equation} \label{eq:er}
    \varepsilon_r \triangleq \frac{P_{\mathrm{max}}}{P_{\mathrm{min}}} =\frac{\max\{\mathcal{X}\}+\beta}{\min\{\mathcal{X}\}+\beta}.
\end{equation}
Hence, for given values of $\mathrm{OMA}$ and $\varepsilon_r$, the parameters $\eta$ and $\beta$ can be calculated directly from \eqref{eq:oma} and \eqref{eq:er}. The OMA of the optical signal is shown in Fig.~\ref{fig:mzm_tf} (blue curve) as the peak-to-peak value of the optical signal, and is measured in units of [mW] or [dBm]. Also, note that for a given $\varepsilon_r$ the value of $\beta$ will change accordingly with $\min\{\mathcal{X}\}$ and $\max\{\mathcal{X}\}$. Furthermore, for a finite $\varepsilon_r$, then $\beta >|\min\{\mathcal{X}\}|$, and in the \textit{ideal} case where $\varepsilon_r\to\infty$ then $\beta=|\min\{\mathcal{X}\}|$. Thus, the constraint $\beta \geq |\min\{\mathcal{X}\}|$ is always satisfied.

\subsection{Optical Channel}
The optical signal propagates through a fiber of length $L$. Only the attenuation of the signal is considered since the wavelength of the O-band laser is in the zero dispersion regime \cite{alam2021net}, and thus, chromatic dispersion is negligible. Therefore, after propagation, the received optical power is
\begin{equation} \label{eq:Prx}
    P_{\mathrm{rx}}(t) = \varphi P_{\mathrm{tx}}(t),
\end{equation}
where $10\log_{10}(1/\varphi) = \alpha L$, and $\alpha$ is the attenuation coefficient in [dB/km].

\subsection{Receiver} \label{sec:receiver}

As shown in Fig.~\ref{fig:imdd_link}, a single PD with responsivity $\mathfrak{R}$ [A/W] is used to convert the incident optical power $P_{\mathrm{rx}}(t)$ into an electrical current. Due to this conversion process, electrical noise (also referred to as thermal noise) $n_{\thn}(t)$ is generated. Thermal noise is modeled as additive white Gaussian noise with zero mean and PSD \mbox{$S_{\thn}(f)=N_0/2$}. The resulting photocurrent is then given by
\begin{equation} \label{eq:I_pd}
    i(t) = \mathfrak{R}P_\mathrm{rx}(t) + n_{\thn}(t).
\end{equation}
The TIA input is AC coupled, and as a result, the bias of the photocurrent $\beta^{\prime}$ is removed. The resulting current is then amplified and converted to a voltage $v(t)$ via a gain $G$ (with units of [$\Omega$]). The resulting voltage signal is
\begin{align} \label{eq:v_1}
    v(t) &= G(i(t) - \beta^{\prime})\\ \label{eq:v_2}  
    &= G\mathfrak{R}P_{\mathrm{rx}}(t) - G\beta^{\prime}+Gn_\thn(t)\\ \label{eq:tia_voltage}
    &= G\mathfrak{R}\varphi P_{\mathrm{tx}}(t) - G\beta^{\prime}+Gn_\thn(t),
\end{align}
where to pass from \eqref{eq:v_1} to \eqref{eq:v_2} we use \eqref{eq:I_pd}, and to pass from \eqref{eq:v_2} to \eqref{eq:tia_voltage} we use \eqref{eq:Prx}. The gain of the TIA is set to 
\begin{equation} \label{eq:gain}
    G=\frac{1}{\mathfrak{R} \varphi \eta}
\end{equation}
to compensate for the PD responsivity, fiber losses, and electro-optical conversion. Furthermore, the photocurrent bias to be removed is set to $\beta^{\prime}=G^{-1}\beta$.

Lastly, the output voltage of the amplifier \eqref{eq:tia_voltage} is filtered with a matched filter $p(-t)$ with bandwidth $B$ to band-limit the noises, and then, digitized by taking samples every $t=kT$ seconds. This processing results in the discrete channel output samples $Y_k=\tilde{v}(kT)$\footnote{We use $\tilde{x}(t)$ to denote the filtered version of the signal $x(t)$.} given by
\begin{align} \label{eq:Y_k_prev}
    Y_k &= \eta^{-1}\tilde{P}_{\mathrm{tx}}(kT) - \beta+G\tilde{n}_\thn(kT)\\ \label{eq:Y_k}
    &= X_k + (X_k+\beta)\tilde{n}_\rin(kT) + G\tilde{n}_\thn(kT),
\end{align}
where the derivation for \eqref{eq:Y_k} is given in Appendix~\ref{app:A}. Furthermore, the same derivation is used to obtain the relationship between $\mathcal{X}$ and the power levels $P_{\mathrm{max}}$ and $P_{\mathrm{min}}$ from \eqref{eq:P_max} and \eqref{eq:P_min}.

The entire IM-DD system can be modeled as the channel expression given at the bottom of Fig.~\ref{fig:imdd_link}. The derivation for this equivalent channel model is described next.

\subsection{Equivalent Channel Model}

We assume that the effect of previously transmitted symbols can be neglected with proper equalization techniques \cite[Eq.~(15)]{che2023modulation}. Hence, we model the IM-DD system as a memoryless channel with additive noise and channel output given by
\begin{equation} \label{eq:ch_eq1}
    Y_k = X_k + Z_k^\prime,
\end{equation}
where $X_k$ are the transmitted PAM symbols. Since in \eqref{eq:Y_k_prev} both $\tilde{n}_\rin(kT)$ and $\tilde{n}_\thn(kT)$ are Gaussian and independent random variables (RVs), the effect of both noises can be grouped into a single RV $Z_k^\prime$. The additive noise $Z^\prime$ is signal-dependent as seen from \eqref{eq:Y_k}, i.e., $Z^\prime=Z\!\cdot\!\sigma(X)$, where $Z$ is a zero-mean and unit variance Gaussian RV. The effect of RIN and thermal noise translate into a signal-dependency given by
\begin{equation} \label{eq:ch_noise1}
    \sigma(X) = \sqrt{\sigma_{\thn}^2 + (X+\beta)^2\sigma_{\rin}^2},
\end{equation}
where $\sigma_{\thn}^2$ is the variance contribution due to thermal noise and $(X+\beta)^2\sigma_{\rin}^2$ is the contribution associated to RIN. Combining \eqref{eq:ch_eq1} and \eqref{eq:ch_noise1}, we have the equivalent channel
\begin{equation} \label{eq:ch_eq1_full}
    Y = X + Z\sqrt{\sigma_\thn^2 + (X+\beta)^2\sigma_\rin^2},
\end{equation}
where for an electrical system bandwidth $B$, the variance of the receiver noise and RIN is calculated as \cite[Eq.~(17)]{xu2011impact}
\begin{equation} \label{eq:sigmas}
    \sigma_\thn^2 = G^2 (N_0/2) B\, ,\quad \sigma_\rin^2 = 10^{\frac{\mathrm{RIN}}{10}} B.
\end{equation}
\section{Symbol Error Probability Analysis}
\label{sec:symbol_error}

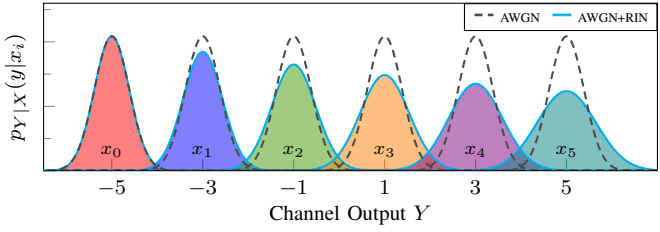
\begin{figure}[!t]
    \centering
    \definecolor{my_green}{rgb}{0.25, 0.6, 0.0}    
\definecolor{my_purple}{rgb}{0.5, 0, 0.5}
\definecolor{azure}{rgb}{0, 0.5, 0.5}

\newcommand{\lw}{0.8pt}

\newcommand{\graysolid}{\raisebox{2pt}{\tikz{\draw[color=black!50,solid,line width=\lw,mark=o,mark options={solid}](0,0) -- (2.5mm,0);}}}   
\newcommand{\redsolid}{\raisebox{2pt}{\tikz{\draw[color=red,solid,line width=\lw,mark=o,mark options={solid}](0,0) -- (2.5mm,0);}}}   
\newcommand{\bluesolid}{\raisebox{2pt}{\tikz{\draw[color=blue,solid,line width=\lw,mark=o,mark options={solid}](0,0) -- (2.5mm,0);}}}   
\newcommand{\purplesolid}{\raisebox{2pt}{\tikz{\draw[color=my_purple,solid,line width=\lw,mark=o,mark options={solid}](0,0) -- (2.5mm,0);}}}   
\newcommand{\greensolid}{\raisebox{2pt}{\tikz{\draw[color=my_green,solid,line width=\lw,mark=o,mark options={solid}](0,0) -- (2.5mm,0);}}}   

\newcommand{\blacksdashed}{\raisebox{2pt}{\tikz{\draw[color=black!70,densely dashed,line width=1.0pt,mark=o,mark options={solid}](0,0) -- (2.5mm,0);}}}   
\newcommand{\cyansolid}{\raisebox{2pt}{\tikz{\draw[color=cyan,solid,line width=1.0pt,mark=o,mark options={solid}](0,0) -- (2.5mm,0);}}}   

\begin{tikzpicture}
    \begin{axis}[
    width=1.1\columnwidth,  
    height=1.5in,
    xmin=-6.5, xmax=7,
    ymin=1e-3, ymax=1.3,    
    xlabel={Channel Output $Y$},
    x label style={yshift=3pt},
    xtick={-5,-3,...,5},
    yticklabels={\empty},
    ylabel={$p_{Y|X}(y|x_i)$},
    y label style={yshift=-5pt},
    ytick pos=left,
    xtick pos=bottom,
    font=\footnotesize,
    minor tick num=1,
    grid style = {dotted,lightgray},
    legend style = {legend pos=north east, font=\footnotesize, legend cell align=left, row sep=-0.5ex},
    domain=-7:7,
    samples=701,
    ]

        \pgfplotstableread{FigTikz/data_txt/histogram_rx_symbols_4a.txt}\datatable
        
                
        \addplot [name path=RIN0, cyan, solid, line width=0.8pt] {exp(-(x+5)^2/(2*(0.0407 + (-5+16.26)^2*8.3263e-4) ))/sqrt(2*pi*(0.0407 + (-5+16.26)^2*8.3263e-4))};
        \addplot [name path=RIN1, cyan, solid, line width=0.8pt] {exp(-(x+3)^2/(2*(0.0407 + (-3+16.26)^2*8.3263e-4) ))/sqrt(2*pi*(0.0407 + (-3+16.26)^2*8.3263e-4))};
        \addplot [name path=RIN2, cyan, solid, line width=0.8pt] {exp(-(x+1)^2/(2*(0.0407 + (-1+16.26)^2*8.3263e-4) ))/sqrt(2*pi*(0.0407 + (-1+16.26)^2*8.3263e-4))};
        \addplot [name path=RIN3, cyan, solid, line width=0.8pt] {exp(-(x-1)^2/(2*(0.0407 + (1+16.26)^2*8.3263e-4) ))/sqrt(2*pi*(0.0407 + (1+16.26)^2*8.3263e-4))};
        \addplot [name path=RIN4, cyan, solid, line width=0.8pt] {exp(-(x-3)^2/(2*(0.0407 + (3+16.26)^2*8.3263e-4) ))/sqrt(2*pi*(0.0407 + (3+16.26)^2*8.3263e-4))};
        \addplot [name path=RIN5, cyan, solid, line width=0.8pt] {exp(-(x-5)^2/(2*(0.0407 + (5+16.26)^2*8.3263e-4) ))/sqrt(2*pi*(0.0407 + (5+16.26)^2*8.3263e-4))};

        \foreach \sm in {-5,-3,...,5} {
            \addplot [black!70, densely dashed, line width=0.8pt] {exp(-(x-\sm)^2/(2*0.1463))/sqrt(2*pi*0.1463)};
        }
        
        \addplot[name path=baseline, forget plot] {1e-4};

        \addplot[red, opacity=0.5,forget plot] fill between[of=RIN0 and baseline];
        \addplot[blue, opacity=0.5,forget plot] fill between[of=RIN1 and baseline];
        \addplot[my_green, opacity=0.5,forget plot] fill between[of=RIN2 and baseline];
        \addplot[orange, opacity=0.5,forget plot] fill between[of=RIN3 and baseline];
        \addplot[my_purple, opacity=0.5,forget plot] fill between[of=RIN4 and baseline];
        \addplot[azure, opacity=0.5,forget plot] fill between[of=RIN5 and baseline];
        
        \node [draw,font=\tiny,fill=white,anchor=north east] at (rel axis cs:1,1) {\shortstack[l]{
        \blacksdashed \hspace{0.01cm} AWGN \hspace{0.025cm} \cyansolid \hspace{0.01cm} AWGN+RIN
        }};

        \node [font=\scriptsize] at (axis cs:-5,0.15) {$x_0$};
        \node [font=\scriptsize] at (axis cs:-3,0.15) {$x_1$};
        \node [font=\scriptsize] at (axis cs:-1,0.15) {$x_2$};
        \node [font=\scriptsize] at (axis cs:1,0.15) {$x_3$};
        \node [font=\scriptsize] at (axis cs:3,0.15) {$x_4$};
        \node [font=\scriptsize] at (axis cs:5,0.15) {$x_5$};

    \end{axis}

\end{tikzpicture}
    \caption{Conditional PDF of the channel output $y$ given a transmitted PAM symbol $X=x_i$.}
    \label{fig:cond_pdf}
\end{figure}

From now on, we use $x_i$, with $i=0,1,\dotsc,M\!-\!1$, to represent the $i$-th symbol of the PAM-$M$ constellation. For the memoryless model \eqref{eq:ch_eq1_full} under consideration, the conditional probability density function (PDF) $p_{Y|X}(y|x_i)$ is given by
\begin{equation} \label{eq:cond_pdf}
    p_{Y|X}(y|x_i) = \frac{1}{\sqrt{2\pi\sigma_i^2}}\mathrm{exp}\left(-\frac{(y-x_i)^2}{2\sigma_i^2}\right),
\end{equation}
where $\sigma_i^2$ is the noise variance conditioned on $X=x_i$ that can be obtained from \eqref{eq:ch_eq1_full} as
\begin{equation} \label{eq:cond_var}
    \sigma_i^2 = \sigma_\thn^2 + (x_i+\beta)^2\sigma_\rin^2.
\end{equation}
Since $(x_i+\beta)\!\geq\! 0$, and $x_j\!>\!x_i$, then $\sigma_j\!>\!\sigma_i$ for all $j\!>\!i$. An example of the conditional PDF \eqref{eq:cond_pdf} is visualized in Fig.~\ref{fig:cond_pdf} where we consider a PAM-6 constellation with $\mathcal{X} =\{\pm1, \pm3, \pm5\}$. Due to the presence of RIN, symbols mapped to larger optical power values result in a larger noise variance, which can be seen as a wider distribution with respect to the case with AWGN only. In this paper, we consider the optimal decision thresholds of a MAP detector. Thus, in order to capture the effect of the signal-dependent noise, the optimal threshold calculation will be developed in the following section.

\begin{table*}[!t]
    \centering
    \caption{Comparison of optimal thresholds for different cases}
    \begin{tabular}{c|c|c}
        \hline\hline
        Case & Optimal thresholds $r_i^\star$ & Expression \\
        \hline
        General, Exact & 
        $\frac{1}{\sigma_j^2-\sigma_i^2}\left(\sigma_j^2x_i-\sigma_i^2x_j+\sigma_i\sigma_j\left[ (x_j-x_i)^2+2(\sigma_j^2-\sigma_i^2)\log\left(\frac{P_X(x_i)}{P_X(x_j)}\frac{\sigma_j}{\sigma_i}\right) \right]^{\frac{1}{2}}\right)$&
        \eqref{eq:r_opt}, Theorem~\ref{theo:optimal}\\

        Uniform, Exact &
        $\frac{1}{\sigma_j^2-\sigma_i^2}\left(\sigma_j^2x_i-\sigma_i^2x_j+\sigma_i\sigma_j\left[ (x_j-x_i)^2+2(\sigma_j^2-\sigma_i^2)\log\left(\frac{\sigma_j}{\sigma_i}\right) \right]^{\frac{1}{2}}\right)$&
        \eqref{eq:r_opt_alt}, Corollary~\ref{cor:exact}, \cite{chagnon2014experimental}\\

        Uniform, Approx. &
        $\frac{x_i\sigma_j + x_j\sigma_i}{\sigma_i+\sigma_j}$ &
        \eqref{eq:map_approx}  Corollary~\ref{cor:approx}, \cite[Ch.~4.6.1]{agrawal2012fiber}\\

        AWGN Only, Exact &
        $\frac{\sigma^2}{x_j-x_i}\log\left(\frac{P_X(x_i)}{P_X(x_j)}\right)+\frac{x_i+x_j}{2}$ &
        $\sigma_i^2=\sigma_j^2$, $\forall i,j$ \cite[Ch.~4.1]{proakis2008digital}\\
        
        \hline\hline
    \end{tabular}
    
    \label{tab:r_comparison}
\end{table*}

\subsection{MAP Decision Thresholds}
The optimal decision criteria for the noisy received symbols to minimize SER is the MAP detection rule given by
\begin{align}
    \hat{x}&=\underset{x_i\in\mathcal{X}}{\arg\!\max}\ P_{X|Y}(x_i|y)\\
    &= \underset{x_i\in\mathcal{X}}{\arg\!\max}\ P_X(x_i)p_{Y|X}(y|x_i), \label{eq:map_argmax0}
\end{align}
where $P_X(x)$ is the probability mass function of the input symbols.

Let $r_i$ be the decision threshold used to decide the transmitted symbol based on the channel observation $y$. The optimal decision threshold is denoted by $r_i^\star$ and satisfies
\begin{equation} \label{eq:MAP_eq0}
    P_X(x_i)p_{Y|X}(r_i|x_i) = P_X(x_j)p_{Y|X}(r_i|x_j),
\end{equation}
where the subindex $j=i+1$ has been used for ease of notation. The following theorem gives an explicit solution for the optimal decision thresholds. The expression in the following theorem is, to the best of our knowledge, not known in the literature.
\begin{theorem}[\textit{Optimal Decision Thresholds}] \label{theo:optimal}
The optimal decision thresholds $r_i^\star$ for the channel in \eqref{eq:ch_eq1_full}, are
\begin{equation} \label{eq:r_opt}
    \begin{split}
        r_i^\star = &\frac{1}{\sigma_j^2 - \sigma_i^2}\Bigg( \sigma_j^2 x_i - \sigma_i^2 x_j + \sigma_i\sigma_j\Bigg[(x_j-x_i)^2\Bigg.\Bigg.\\
        &\quad \quad \;\Bigg.\Bigg.+ 2(\sigma_j^2-\sigma_i^2)\log\left(\frac{P_X(x_i)}{P_X(x_j)}\frac{\sigma_j}{\sigma_i}\right)\Bigg]^{\frac{1}{2}} \Bigg).
    \end{split}
\end{equation}
\end{theorem}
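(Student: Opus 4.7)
The plan is to solve the MAP threshold equation \eqref{eq:MAP_eq0} directly by substituting the Gaussian conditional PDF \eqref{eq:cond_pdf}, reducing the transcendental equation to a quadratic in $r_i$, and then applying the quadratic formula while simplifying the discriminant into a clean closed form. The main steps, in order:

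First, I would insert \eqref{eq:cond_pdf} into \eqref{eq:MAP_eq0}, cancel the common $\sqrt{2\pi}$ factor, and take the natural logarithm of both sides. This converts the equality of two Gaussians into
\[
\log\!\frac{P_X(x_i)}{\sigma_i} - \frac{(r_i-x_i)^2}{2\sigma_i^2} \;=\; \log\!\frac{P_X(x_j)}{\sigma_j} - \frac{(r_i-x_j)^2}{2\sigma_j^2}.
\]
Multiplying through by $2\sigma_i^2\sigma_j^2$ and expanding the two squares produces a quadratic equation in $r_i$ of the form $a r_i^2 + b r_i + c = 0$, with coefficients
$a = \sigma_j^2 - \sigma_i^2$, $b = -2(\sigma_j^2 x_i - \sigma_i^2 x_j)$, and $c = \sigma_j^2 x_i^2 - \sigma_i^2 x_j^2 - 2\sigma_i^2\sigma_j^2 \log\!\bigl(P_X(x_i)\sigma_j / P_X(x_j)\sigma_i\bigr)$. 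Since $\sigma_j \neq \sigma_i$ whenever $\sigma_{\mathrm{rin}} > 0$ and $x_j \neq x_i$, the leading coefficient is nonzero and the quadratic formula applies.

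Next, I would compute the discriminant $b^2 - 4ac$. The algebraically tedious part is to recognize that the purely geometric piece (the part not involving the log ratio) collapses neatly: expanding
\[
(\sigma_j^2 x_i - \sigma_i^2 x_j)^2 - (\sigma_j^2 - \sigma_i^2)(\sigma_j^2 x_i^2 - \sigma_i^2 x_j^2)
\]
and cancelling like terms yields the perfect-square identity $\sigma_i^2\sigma_j^2(x_j - x_i)^2$. Adding back the log contribution, the full discriminant factors as
\[
4\,\sigma_i^2\sigma_j^2\!\left[(x_j-x_i)^2 + 2(\sigma_j^2-\sigma_i^2)\log\!\frac{P_X(x_i)\,\sigma_j}{P_X(x_j)\,\sigma_i}\right],
\]
so the square root reduces to $2\sigma_i\sigma_j$ times the bracket raised to $1/2$, producing exactly the structure claimed in \eqref{eq:r_opt}.

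Finally, I would select the correct root out of the two given by $\pm$ in the quadratic formula. The argument is that $r_i^\star$ must lie in the interval $(x_i, x_j)$: each conditional PDF is unimodal at its own mean, so the intersection relevant to the MAP decision between neighbouring symbols falls strictly between the two means. Plugging in either extreme shows that only the ``$+$'' branch produces a value inside $(x_i, x_j)$ for the regime of interest ($\sigma_j > \sigma_i$ because of RIN), while the ``$-$'' branch corresponds to a spurious crossing of the two Gaussians occurring outside this interval. Selecting the $+$ sign delivers \eqref{eq:r_opt}.

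The main obstacle I expect is not conceptual but algebraic bookkeeping: the discriminant simplification involves several terms that only cancel after careful regrouping, and it is easy to make sign errors when the $\sigma_i^2 - \sigma_j^2$ factor gets distributed. The root-selection step is quick once one notes that in the AWGN limit $\sigma_i \to \sigma_j$ the ``$+$'' branch recovers the familiar log-likelihood threshold (final row of Table~\ref{tab:r_comparison}), whereas the ``$-$'' branch blows up — which both confirms the sign choice and provides a useful sanity check for the full expression.
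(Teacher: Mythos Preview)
Your proposal is correct and follows exactly the same route as the paper's Appendix~\ref{app:B}: substitute the Gaussian PDF into \eqref{eq:MAP_eq0}, take logarithms, reduce to a quadratic in $r_i$, and apply the quadratic formula with the ``$+$'' root. In fact you supply more detail than the paper does, since the paper simply states the quadratic-formula solution without spelling out the discriminant simplification $(\sigma_j^2 x_i - \sigma_i^2 x_j)^2 - (\sigma_j^2 - \sigma_i^2)(\sigma_j^2 x_i^2 - \sigma_i^2 x_j^2)=\sigma_i^2\sigma_j^2(x_j-x_i)^2$ or the root-selection argument.
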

\begin{proof}
    See Appendix~\ref{app:B}.
\end{proof}

From \eqref{eq:r_opt}, we see that $r_i^\star$ explicitly depends on the variance of the neighboring symbols ($\sigma_i^2$ and $\sigma_j^2$), as well as their probabilities ($P_X(x_i)$ and $P_X(x_j)$), thereby capturing the signal-dependency noise.

Theorem~\ref{theo:optimal} can be used for the particular case of a uniform input distribution $P_X(x)$. In this scenario, the thresholds depend only on the conditional variances and the symbol locations. The following corollaries are obtained from Theorem~\ref{theo:optimal} and matches results in the literature \cite{chagnon2014experimental}.

\begin{corollary}[\textit{Equally-likely Symbols: Exact}, \cite{chagnon2014experimental}] \label{cor:exact}
    For an uniform input distribution, the optimal decision thresholds are
    \begin{equation} \label{eq:r_opt_alt}
    \begin{split}
        r_i^{\star} = &\frac{1}{\sigma_j^2 - \sigma_i^2}\Bigg( \sigma_j^2 x_i - \sigma_i^2 x_j + \sigma_i\sigma_j\Bigg[(x_j-x_i)^2\Bigg.\Bigg.\\
        &\quad\quad\quad\quad\quad \;\Bigg.\Bigg.+ 2(\sigma_j^2-\sigma_i^2)\log\left(\frac{\sigma_j}{\sigma_i}\right)\Bigg]^{\frac{1}{2}} \Bigg).
    \end{split}
\end{equation}
\end{corollary}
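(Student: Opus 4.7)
The plan is to obtain Corollary~\ref{cor:exact} as a direct specialization of Theorem~\ref{theo:optimal} rather than by re-deriving the threshold equation from scratch. Since Theorem~\ref{theo:optimal} already supplies the optimal thresholds $r_i^\star$ for the channel in \eqref{eq:ch_eq1_full} under a general input distribution $P_X$, the only remaining task is to substitute the uniform distribution $P_X(x_i)=1/M$ for all $i\in\{0,1,\dotsc,M-1\}$ into \eqref{eq:r_opt} and simplify.

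First, I would evaluate the probability ratio that appears inside the logarithm in \eqref{eq:r_opt}. Under the uniform assumption, $P_X(x_i)/P_X(x_j)=1$ for every pair of neighboring symbols indexed by $j=i+1$. Consequently, the argument of the logarithm reduces from $(P_X(x_i)/P_X(x_j))\cdot(\sigma_j/\sigma_i)$ to simply $\sigma_j/\sigma_i$, and the logarithmic term becomes $\log(\sigma_j/\sigma_i)$. None of the other terms in \eqref{eq:r_opt} depend on the input distribution, so they remain unchanged.

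Substituting this simplification directly into \eqref{eq:r_opt} yields exactly \eqref{eq:r_opt_alt}, completing the proof. I do not anticipate any real obstacle here: the argument is a one-line specialization of Theorem~\ref{theo:optimal}, and the only thing to be careful about is verifying that the remaining factor $\sigma_j/\sigma_i$ inside the logarithm is preserved (it comes from the ratio of normalization constants of the two conditional Gaussians in \eqref{eq:cond_pdf}, not from $P_X$), so it must not be cancelled when setting the prior ratio to unity. Noting this explicitly should suffice, and the resulting expression matches the thresholds reported in \cite{chagnon2014experimental} as claimed.
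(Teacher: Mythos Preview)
Your proposal is correct and matches the paper's own proof essentially verbatim: the paper simply states that the corollary follows by setting $P_X(x_i)=P_X(x_j)$ inside the $\log(\cdot)$ of \eqref{eq:r_opt}. Your added remark about not cancelling the $\sigma_j/\sigma_i$ factor is a harmless clarification, but otherwise there is nothing to add.
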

\begin{proof}
    Setting $P_X(x_i)=P_X(x_j)$ inside $\log(\cdot)$ of \eqref{eq:r_opt}.
\end{proof}
The result of equation~\eqref{eq:r_opt_alt} is known in the literature (see, e.g.,  \cite{chagnon2014experimental}). This expression is often further approximated into a simplified expression where of the terms from $r_i^{\star}$ is neglected. This approximation is given by the following Corollary, whose proof again follows directly from Theorem~\ref{theo:optimal}.

\begin{corollary}[\textit{Equally-likely Symbols: Approximation}, \cite{agrawal2012fiber}] \label{cor:approx}
    For an uniform input distribution, the optimal decision thresholds can be approximated by
\begin{equation} \label{eq:map_approx}
    r_i^{\star} \approx \hat{r}_i = \frac{x_i\sigma_j+x_j\sigma_i}{\sigma_i+\sigma_j}.
\end{equation}
\end{corollary}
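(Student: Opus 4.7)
The plan is to start from the exact uniform-case threshold given in Corollary~\ref{cor:exact}, namely equation~\eqref{eq:r_opt_alt}, and obtain the approximation by discarding the logarithmic term inside the square root, followed by a straightforward algebraic simplification. The discarded term $2(\sigma_j^2 - \sigma_i^2)\log(\sigma_j/\sigma_i)$ is negligible whenever the symbol separation $(x_j - x_i)^2$ dominates, which is the typical regime for well-designed PAM constellations where the signal-dependent variance difference between adjacent levels is small compared to the squared spacing.

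With that approximation applied, the bracketed quantity reduces to $(x_j - x_i)^2$, whose square root is simply $(x_j - x_i)$ since $x_j > x_i$. Substituting back, the threshold becomes
\begin{equation*}
    r_i^\star \approx \frac{\sigma_j^2 x_i - \sigma_i^2 x_j + \sigma_i\sigma_j(x_j - x_i)}{\sigma_j^2 - \sigma_i^2}.
\end{equation*}

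The remaining step is to factor the numerator. First I would rewrite the denominator as $(\sigma_j - \sigma_i)(\sigma_j + \sigma_i)$. Then, grouping the numerator terms by $x_i$ and $x_j$ and factoring,
\begin{equation*}
    x_i(\sigma_j^2 - \sigma_i\sigma_j) + x_j(\sigma_i\sigma_j - \sigma_i^2) = (\sigma_j - \sigma_i)(x_i\sigma_j + x_j\sigma_i).
\end{equation*}
Cancelling the common factor $(\sigma_j - \sigma_i)$ (nonzero since the noise is strictly signal-dependent for adjacent levels with $x_j > x_i$) yields exactly $\hat{r}_i = (x_i\sigma_j + x_j\sigma_i)/(\sigma_i + \sigma_j)$, completing the argument.

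The only real subtlety, rather than an obstacle, is noting the conditions under which dropping the logarithmic term is accurate: one needs $(x_j - x_i)^2 \gg 2(\sigma_j^2 - \sigma_i^2)\log(\sigma_j/\sigma_i)$. Since this is a stated approximation (not an equality), the proof itself is essentially the factorization above, and no further justification is required beyond pointing to the source~\cite{agrawal2012fiber} where this simplified midpoint-style rule is standard.
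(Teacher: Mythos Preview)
Your proposal is correct and follows exactly the same approach as the paper: drop the term $2(\sigma_j^2-\sigma_i^2)\log(\sigma_j/\sigma_i)$ in \eqref{eq:r_opt_alt} and simplify. The paper's proof is terse and simply says ``simplifying the expression,'' whereas you have spelled out the factorization explicitly, but the argument is identical.
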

\begin{proof}
    From \eqref{eq:r_opt_alt}, setting the term $2(\sigma_j^2-\sigma_i^2)\log\left(\frac{\sigma_j}{\sigma_i}\right)\approx 0$, and then simplifying the expression. 
\end{proof}
 
The expression in Corollary~\ref{cor:approx} is a common approximation employed in the literature (e.g., \cite[Ch.~4.6.1]{agrawal2012fiber}). Although this approximation results in a suboptimal threshold calculation, the approximation behaves well for the uniform case. In contrast, when the input distribution is nonuniform, the SER calculation is less accurate. We will show this in Section \ref{sec:results}.

Table~\ref{tab:r_comparison} shows a comparison of the optimal decision thresholds for different scenarios. In this table, the well-known decision thresholds for the case of the pure AWGN channel ($\sigma_i^2=\sigma_j^2$, for all $j,i$) are also included (see, e.g., \cite[Ch.~4.1]{proakis2008digital}). Note that the AWGN only expression can also be obtained by setting $a=0$ in \eqref{eq:r_abcd} of Appendix~\ref{app:B}, and then solving for $r_i$.

\subsection{PAM Symbol Error Probability}
At the receiver, the conditional error probability for a given transmitted symbol $x_i$ is defined as \mbox{$P_e^{(i)}\triangleq\Pr{Y\notin [r_{i-1},r_i]\mid X=x_i}$}, where $r_i$ is the $i$-th decision threshold (e.g., $r_i^\star$ in \eqref{eq:r_opt} or $\hat{r}_i$ in $\eqref{eq:map_approx}$). If noise pushes the signal beyond the thresholds $r_{i-1}$, $r_i$, there will be a symbol error. The probability of a symbol error for a given transmitted symbol $x_i$ is
\begin{align}
        P_e^{(i)} &= \Pr{Y\!<\!r_{i-1}|X\!=\!x_i} + \Pr{Y\!>\!r_i|X\!=\!x_i}\\
        &= \int_{\!-\infty}^{r_{i-1}}  \!\! p_{Y|X}(y|x_i)\mathrm{d}y + \!\int_{r_i}^\infty  \!\!p_{Y|X}(y|x_i)\mathrm{d}y \\
        &=  Q\left( \frac{x_i - r_{i-1}}{\sigma_i} \right) + Q\left( \frac{r_i - x_i}{\sigma_i} \right),
\end{align}
where $Q(x) \triangleq \frac{1}{\sqrt{2\pi}}\int_x^\infty \mathrm{exp}\left(-\frac{\nu^2}{2}\right)\mathrm{d}\nu$ is the Gaussian Q-function. Then, the average symbol error probability $P_e$ is 
\begin{equation} \label{eq:SER_analytical}
    \begin{split}
        P_e &= \sum_{i=0}^{M-1}P_X(x_i)P_e^{(i)}\\
        &= \sum_{i=0}^{M-1}P_X(x_i)\left[ Q\left( \frac{x_i - r_{i-1}}{\sigma_i} \right) + Q\left( \frac{r_i-x_i}{\sigma_i} \right)\right],
    \end{split}
\end{equation}
with $r_{-1}=-\infty$ and $r_{M-1}=\infty$. The key contribution with respect to previous PAM SER expressions, such as the ones in \cite{chagnon2014experimental} or \cite{agrawal2012fiber}, is that, in addition to considering an arbitrary constellation $\mathcal{X}$, the input probability distribution $P_X(x)$ is also taken into account in both the thresholds calculation \eqref{eq:r_opt} and symbol error probability \eqref{eq:SER_analytical}. 
\section{Numerical Simulations}
\label{sec:results}

In this section, the analytical expression \eqref{eq:SER_analytical} for the SER  presented in the previous section is compared with numerical simulations to validate the accuracy of the developed expression. The SER is evaluated for different values of OMA, by adjusting the parameter $\eta$ (see~\eqref{eq:oma}). By adjusting $\eta$, the TIA gain $G$ changes according to \eqref{eq:gain}. Thus, for different values of OMA, the variance of $\sigma_{\thn}^2$ in \eqref{eq:sigmas} will also change. The simulation parameters are summarized in Table~\ref{tab:my_label}, and the parameters specific to different modulation formats are summarized in Table~\ref{tab:mod_param}. 

\begin{table}[!t]
    \centering
    \caption{Simulation Parameters}
    \begin{tabular}{l c c}
    \hline \hline
       Laser RIN  & $\mathrm{RIN}$    &   $-140$ [dB/Hz]\\ \hline
       Modulator ER & $\mathrm{\varepsilon_r}$ & 5 [dB]\\ \hline
       Fiber Length &   $L$ &   $2$ [km]\\ \hline
       Attenuation  &   $\alpha$ & $0.35$ [dB/km]\\ \hline
       PD Responsivity  &   $\mathfrak{R}$    &   0.5 [A/W]\\ \hline
       PD-TIA Thermal Noise  &   $\sqrt{N_0/2}$   &   18 [pA/$\sqrt{\mathrm{Hz}}$]\\
       \hline
       \hline
    \end{tabular}    
    \label{tab:my_label}
\end{table}

\begin{table}[!t]
    \centering
    \caption{Modulation Parameters}
    \begin{tabular}{l c c c}
    \hline \hline
       Parameter & PAM-4 & PAM-6 & PAM-8\\ \hline
       Electrical Bandwidth $B$ & $68$ [GHz] & $52$ [GHz] & $45$ [GHz]\\
       Constellation $\min\{\mathcal{X}\}$ & -3 & -5 & -7\\
       Constellation $\max\{\mathcal{X}\}$ & +3 & +5 & +7\\
       Constellation IM Bias $\beta$ & 5.775 & 9.625 & 13.475\\
       \hline
       \hline
    \end{tabular}    
    \label{tab:mod_param}
\end{table}

The calculation of the SER in \eqref{eq:SER_analytical} is performed using the two different decision thresholds discussed in the previous section: the optimal threshold $r_i^\star$ obtained from \eqref{eq:r_opt}, and the commonly used approximation \cite{agrawal2012fiber} $\hat{r}_i$ from \eqref{eq:map_approx}. 
Numerical evaluation is performed via Monte Carlo simulations implementing the MAP decision rule from either~\eqref{eq:map_argmax0} or via the approximated thresholds \eqref{eq:map_approx}. To observe the effect that signal shaping has on the SERs and optimal decision thresholds, both geometric and probabilistic shaping are considered.

\subsection{Uniform and Equally-spaced Constellations}

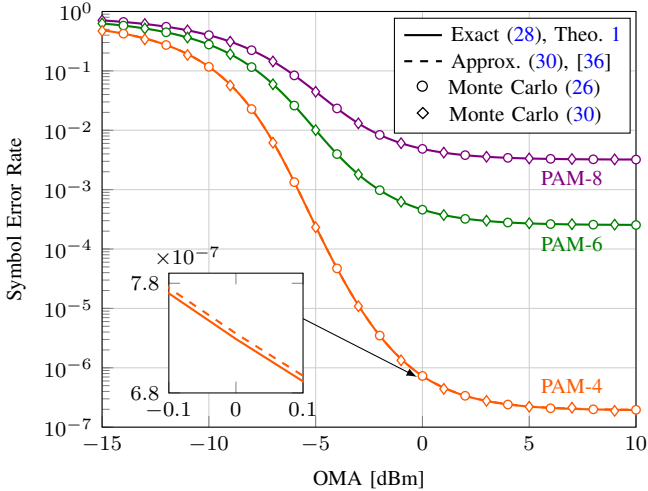
\begin{figure}[!t]
    \centering
    \definecolor{my_green}{rgb}{0.55, 0.71, 0.0}    
\definecolor{my_purple}{rgb}{0.5, 0, 0.5}
\definecolor{azure}{rgb}{0, 0.5, 0.5}

\pgfmathsetmacro{\lw}{0.8}    
\pgfmathsetmacro{\lww}{0.5}    

\pgfplotstableread{FigTikz/data_txt/SER_PAM_comparison.txt}\datatable

\newcommand{\blacksolid}{\raisebox{2pt}{\tikz{\draw[color=black,solid,line width=\lw,mark=o,mark options={solid}](0,0) -- (5mm,0);}}}   
\newcommand{\blacksdashed}{\raisebox{2pt}{\tikz{\draw[color=black,dashed,line width=\lw,mark=o,mark options={solid}](0,0) -- (5mm,0);}}}   
\newcommand{\blacksdotted}{\raisebox{2pt}{\tikz{\draw[color=black,dotted,line width=0.75pt,mark=o,mark options={solid}](0,0) -- (5mm,0);}}}   
\newcommand{\blackdiamond}{\raisebox{0.5pt}{\tikz{\node[draw,line width=\lww,scale=0.4,diamond,color=black,fill=white](){};}}}
\newcommand{\blackcircle}{\raisebox{1pt}{\tikz{\node[draw,line width=\lww,scale=0.5,circle,color=black,fill=white](){};}}}
\newcommand{\ksquare}{\raisebox{1pt}{\tikz{\node[draw,line width=1pt,scale=0.6,rectangle,color=red,fill=white](){};}}}

\newcommand{\pamfour}{orange!70!red}

\begin{tikzpicture}
    \begin{semilogyaxis}[
    width=0.98\columnwidth,  
    height=2.8in,
    font=\footnotesize,
    xmin=-15, xmax=10,
    ymin=1e-7, ymax=1,
    xlabel={OMA [dBm]},
    ylabel={Symbol Error Rate},
    ytick pos=left,
    xtick pos=bottom,
    grid=major,
    grid style = {solid,lightgray!75},
    legend style = {legend pos=north east, font=\footnotesize, legend cell align=left, row sep=-0.5ex},
    ytickten={-7,-6,...,0},
    ]
    
        \pgfplotstableread{FigTikz/data_txt/SER_PAM_comparison_MC2.txt}\datatabletwo
        \pgfplotstableread{FigTikz/data_txt/SER_PAM_comparison_MC3.txt}\datatablethree

        \addplot[name path=SER_PAM8, color=my_purple, solid, line width=\lw] table[x index=0,y index=4] {\datatable};
        \addplot[name path=SER_PAM6, color=green!50!black, solid, line width=\lw] table[x index=0,y index=3] {\datatable};

        \addplot[name path=SER_PAM4, color=\pamfour, solid, line width=\lw] table[x index=0,y index=1] {\datatable};   
        \addplot[name path=SER_PAM4_alt, color=\pamfour, dashed, line width=\lw] table[x index=0,y index=2] {\datatable}; 

        \addplot[only marks, color=my_purple, solid, line width=\lww, mark=*, mark options={solid,scale=0.8,fill=white}] table[x index=0,y index=3]{\datatabletwo}; 
        \addplot[only marks, color=green!50!black, solid, line width=\lww, mark=*, mark options={solid,scale=0.8,fill=white}] table[x index=0,y index=2]{\datatabletwo};
        \addplot[only marks, color=\pamfour, solid, line width=\lww, mark=*, mark options={solid,scale=0.8,fill=white}] table[x index=0,y index=1]{\datatabletwo};

        \addplot[only marks, color=my_purple, solid, line width=\lww, mark=diamond*, mark options={solid,scale=1.0,fill=white}] table[x index=0,y index=3]{\datatablethree}; 
        \addplot[only marks, color=green!50!black, solid, line width=\lww, mark=diamond*, mark options={solid,scale=1.0,fill=white}] table[x index=0,y index=2]{\datatablethree};
        \addplot[only marks, color=\pamfour, solid, line width=\lww, mark=diamond*, mark options={solid,scale=1.0,fill=white}] table[x index=0,y index=1]{\datatablethree};

        \draw[draw,-latex] (axis cs:-6,8e-6) -- (axis cs:-0.25,7e-7);
        \node[font=\footnotesize] at (rel axis cs:0.16,0.41) {$\times10^{-7}$};

        \node[font=\footnotesize, \pamfour] at (axis cs:7, 5e-7) {PAM-4};
        \node[font=\footnotesize, green!50!black] at (axis cs:7, 1.2e-4) {PAM-6};
        \node[font=\footnotesize, my_purple] at (axis cs:7, 1.55e-3) {PAM-8};

        \node [draw,font=\footnotesize,fill=white,anchor=north east] at (rel axis cs:0.99, 0.985) {\shortstack[l]{
            \blacksolid \hspace{0.01cm} Exact \eqref{eq:r_opt}, Theo.~\ref{theo:optimal}\\
            \blacksdashed \hspace{0.01cm} Approx.~\eqref{eq:map_approx}, \cite{agrawal2012fiber}\\
            \hspace{0.08cm} \blackcircle \hspace{0.17cm} Monte Carlo~\eqref{eq:map_argmax0}\\
            \hspace{0.08cm} \blackdiamond \hspace{0.17cm} Monte Carlo \eqref{eq:map_approx}
        }};
        
    \end{semilogyaxis}

    \begin{semilogyaxis}[
        xshift=0.1\columnwidth,
        yshift=0.18in,
        width=0.38\columnwidth,
        height=1.25in,
        xmin=-0.1, xmax=0.1,
        ymin=6.8e-7, ymax=7.9e-7,
        xtick={-0.1,0,0.1},
        ytick={6.8e-7, 7.8e-7},
        yticklabels={$6.8$, $7.8$},
        axis background/.style={fill=white},
        font=\footnotesize,
    ]

        \addplot[color=\pamfour, solid, line width=\lw] table[x index=0,y index=1] {\datatable};   
        \addplot[color=\pamfour, dashed, line width=\lw] table[x index=0,y index=2] {\datatable};

    
    \end{semilogyaxis}


\end{tikzpicture}
    \caption{SER of uniform PAM-4, PAM-6, and PAM-8 over the signal-dependent noise channel from \eqref{eq:ch_eq1_full} with $\mathrm{RIN}=-140$ dB/Hz. The SER difference between the optimal MAP threshold and its approximation is negligible in this case.}
    \label{fig:ser_pam4}
\end{figure}

The results for equally-spaced constellations \mbox{$\mathcal{X}\!=\!\{\pm 1,\pm 3, \dotsc, \pm (M-1)\}$} with a uniform probability distribution $P_X(x_i)=1/M$, $\forall x_i\in\mathcal{X}$, are given in Fig.~\ref{fig:ser_pam4}. 
The analytical expression is plotted with solid and dashed lines. 
The solid line corresponds to using the optimal threshold $r_i^\star$ from \eqref{eq:r_opt}, whereas the dashed line is using the approximated threshold $\hat{r}_i$ from \eqref{eq:map_approx}. Furthermore, the circles correspond to Monte Carlo simulations using \eqref{eq:map_argmax0}, while  diamonds represent Monte Carlo simulations with the approximated thresholds \eqref{eq:map_approx}.
Figure~\ref{fig:ser_pam4} shows that the numerical simulations match exactly the analytical curves for the whole range of OMAs considered in the simulation. Note that the SER curves start to saturate after around 2 [dBm], which is an effect caused by the presence of RIN. The SNR for this type of channel has been studied in \cite[Eq.~(6)]{villenas2025ecoc}\footnote{The SNR calculated in \cite{villenas2025ecoc} is for a 2D constellation. However, the same expression can be extrapolated to the 1D constellations from this work, and the asymptotic behavior for large OMA still holds.}, where it is shown that as OMA increases, the SNR asymptotically converges to a fixed value \cite[Ch. 10.6]{hui2019introduction}. As a consequence, the number of symbol errors cannot be decreased further, which results in an error floor in the SER as can be observed in Fig.~\ref{fig:ser_pam4} for all PAM formats.

The inset in Fig.~\ref{fig:ser_pam4} shows that there is a negligible difference between the solid and dashed lines, and thus, for this uniform signaling scenario, the approximation in \eqref{eq:map_approx} results in a SER that is very close to the optimal one.
This conclusion applies to all PAM constellations considered in Fig.~\ref{fig:ser_pam4}.

\subsection{Shaped Constellations}

The results for shaped constellations are presented in Fig.~\ref{fig:ser_pam4_ps_opt}. Here we showcase the results for shaped PAM-6 with GS and PS. To present the results we use the same legend as in Fig.~\ref{fig:ser_pam4}. Furthermore, equally-spaced uniform PAM-6 is shown by the green curve as a reference from the previous plot.

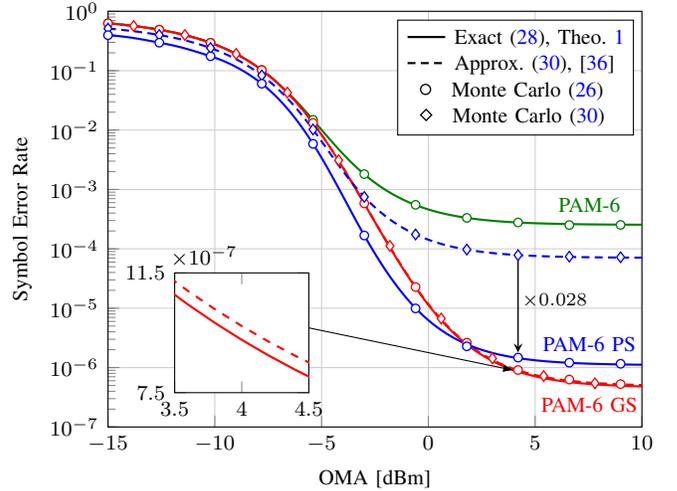
\begin{figure}[!t]
    \centering
    \definecolor{my_green}{rgb}{0.15, 0.5, 0.0}    
\definecolor{my_purple}{rgb}{0.5, 0, 0.5}

\pgfplotstableread{FigTikz/data_txt/SER_PAM_comparison_GS.txt}\datatableGS
\pgfplotstableread{FigTikz/data_txt/SER_PAM_comparison_GS_v2.txt}\datatableGSS
\pgfplotstableread{FigTikz/data_txt/SER_PAM_comparison_PS_Hx23.txt}\datatablePS
\pgfplotstableread{FigTikz/data_txt/SER_PAM_comparison_PSGS_MC_extra.txt}\datatableMC
\pgfplotstableread{FigTikz/data_txt/SER_PAM_comparison_PSGS_MC_extra2.txt}\datatableMCC

\pgfmathsetmacro{\lw}{0.8}    
\pgfmathsetmacro{\lww}{0.5}    

\begin{tikzpicture}
    \begin{semilogyaxis}[
    width=0.98\columnwidth,  
    height=2.8in,
    font=\footnotesize,
    xmin=-15, xmax=10,
    ymin=1e-7, ymax=1,
    xlabel={OMA [dBm]},
    ylabel={Symbol Error Rate},
    ytick pos=left,
    xtick pos=bottom,
    grid=major,
    grid style = {solid,lightgray!75},
    legend style = {legend pos=south west, font=\footnotesize, legend cell align=left, row sep=-0.5ex},
    ytickten={-7,-6,...,0},
    ]
        \addplot[name path=SER_PAM6, color=green!50!black, solid, line width=\lw] table[x index=0,y index=1] {\datatableGS};
        \addplot[name path=SER_PAM6, color=red, densely dashed, line width=\lw] table[x index=0,y index=2] {\datatableGSS};
        \addplot[name path=SER_PAM6, color=red, solid, line width=\lw] table[x index=0,y index=1] {\datatableGSS};
        \addplot[name path=SER_PAM6, color=blue, densely dashed, line width=\lw] table[x index=0,y index=3] {\datatablePS};
        \addplot[name path=SER_PAM6, color=blue, solid, line width=\lw] table[x index=0,y index=2] {\datatablePS};


        
        \addplot[only marks, color=green!50!black, solid, line width=\lww, mark=*, mark options={solid,scale=0.8,fill=white}] table[x index=0,y index=3]{\datatableMC};
        \addplot[only marks, color=red, solid, line width=\lww, mark=*, mark options={solid,scale=0.8,fill=white}] table[x index=0,y index=4]{\datatableMC};
        \addplot[only marks, color=red, solid, line width=\lww, mark=diamond*, mark options={solid,scale=1.0,fill=white}] table[x index=0,y index=1]{\datatableMCC};
        \addplot[only marks, color=blue, solid, line width=\lww, mark=diamond*, mark options={solid,scale=1.0,fill=white}] table[x index=0,y index=2]{\datatableMC};
        \addplot[only marks, color=blue, solid, line width=\lww, mark=*, mark options={solid,scale=0.8,fill=white}] table[x index=0,y index=1]{\datatableMC};


    
        \newcommand{\blacksolid}{\raisebox{2pt}{\tikz{\draw[color=black,solid,line width=\lw,mark=o,mark options={solid}](0,0) -- (5mm,0);}}}   
        \newcommand{\blacksdashed}{\raisebox{2pt}{\tikz{\draw[color=black,densely dashed,line width=\lw,mark=o,mark options={solid}](0,0) -- (5mm,0);}}}   
        \newcommand{\blacksdotted}{\raisebox{2pt}{\tikz{\draw[color=black,dotted,line width=0.75pt,mark=o,mark options={solid}](0,0) -- (5mm,0);}}}   
        \newcommand{\blackdiamond}{\raisebox{0.5pt}{\tikz{\node[draw,line width=\lww,scale=0.4,diamond,color=black,fill=white](){};}}}
        \newcommand{\blackcircle}{\raisebox{0.5pt}{\tikz{\node[draw,line width=\lww,scale=0.5,circle,color=black,fill=white](){};}}}
        \newcommand{\ksquare}{\raisebox{1pt}{\tikz{\node[draw,line width=1pt,scale=0.6,rectangle,color=red,fill=white](){};}}}

        \node [draw,font=\footnotesize,fill=white,anchor= north east] at (rel axis cs:0.985,0.98) {\shortstack[l]{
            \blacksolid \hspace{0.01cm} Exact \eqref{eq:r_opt}, Theo. \ref{theo:optimal}\\
            \blacksdashed \hspace{0.01cm} Approx. \eqref{eq:map_approx}, \cite{agrawal2012fiber}\\
            \hspace{0.08cm} \blackcircle \hspace{0.17cm} Monte Carlo~\eqref{eq:map_argmax0}\\
            \hspace{0.08cm} \blackdiamond \hspace{0.17cm} Monte Carlo \eqref{eq:map_approx}
        }};

        \draw[draw,-latex] (axis cs:-6,5e-6) -- (axis cs:4,9e-7);
        \node[font=\footnotesize] at (rel axis cs:0.18,0.41) {$\times10^{-7}$};

        \draw[->, >=stealth, line width=\lww] (axis cs:4.2, 6.5e-5) -- (axis cs:4.2, 1.8e-6) node[midway, right,font=\scriptsize,yshift=2pt,xshift=-2pt] 
        {$\times0.028$};]
        
    \end{semilogyaxis}

    \begin{semilogyaxis}[
        xshift=0.1\columnwidth,
        yshift=0.18in,
        width=0.38\columnwidth,
        height=1.25in,
        xmin=3.5, xmax=4.5,
        ymin=7.5e-7, ymax=1.15e-6,
        xtick={3.5,4,4.5},
        ytick={7.5e-7, 1.15e-6},
        yticklabels={$7.5$, $11.5$},
        axis background/.style={fill=white},
        font=\footnotesize,
    ]

    \addplot[color=red, solid, line width=\lw] table[x index=0,y index=1] {\datatableGSS};   
    \addplot[color=red, dashed, line width=\lw] table[x index=0,y index=2] {\datatableGSS};


    \end{semilogyaxis}

    


    \node[color=green!50!black, font=\footnotesize] at (6.4,2.95) {\shortstack[l]{ PAM-6 }};
    \node[color=blue, font=\footnotesize] at (6.4,1.1) {\shortstack[l]{ PAM-6 PS }};
    \node[color=red, font=\footnotesize] at (6.4,0.3) {\shortstack[l]{ PAM-6 GS }};

\end{tikzpicture}
    \caption{SER of shaped PAM-6 with signal shaping over the signal-dependent noise channel. For PAM-6 with GS, the SER difference between using the two decision thresholds is negligible. For PAM-6 with PS, the use of the optimal decision threshold $r^\star$ achieves a significantly lower SER than that of using the approximation $\hat{r}$.}
    \label{fig:ser_pam4_ps_opt}
\end{figure}

\begin{figure}[!t]
    \centering
    \definecolor{my_green}{rgb}{0.15, 0.5, 0.0}    
\definecolor{my_purple}{rgb}{0.5, 0, 0.5}

\pgfmathsetmacro{\lw}{0.8}    
\pgfmathsetmacro{\lww}{0.5}    

\begin{subfigure}[b]{\columnwidth}
\begin{tikzpicture}
    \begin{axis}[
        width=1\columnwidth,
        height=1.25in,
        font=\footnotesize,
        xtick pos=bottom,
        ytick pos=left,
        ymin=0, ymax=0.39,
        xmin=-6, xmax=6,
        minor tick num=3,
        grid=both,
        major grid style = {solid,lightgray!75},
        minor grid style = {dashed,lightgray!75},
        yminorgrids=false,
        ylabel={$P_X(x)$},
        ytick={0,0.1,...,0.4},
        xtick={-7,-5,...,7},
        xticklabels=\empty,
        legend style = {at={(1,1)}, anchor=north east, font=\footnotesize, legend cell align=left, row sep=-0.5ex},
        ]
       
        \addplot[ycomb, color=green!50!black, mark=*, line width = 0.8pt, mark options={solid,scale=0.8,fill=white}] 
        coordinates{
            (5, 0.166)
            (3, 0.166)
            (1, 0.166)
            (-1, 0.166)
            (-3, 0.166)
            (-5, 0.166)
            };
        \addlegendentry{No Shaping}
    \end{axis}
\end{tikzpicture}
\end{subfigure}

\begin{subfigure}[b]{\columnwidth}
\begin{tikzpicture}
    \begin{axis}[
        width=1\columnwidth,
        height=1.25in,
        font=\footnotesize,
        xtick pos=bottom,
        ytick pos=left,
        ymin=0, ymax=0.39,
        xmin=-6, xmax=6,
        minor tick num=3,
        grid=both,
        major grid style = {solid,lightgray!75},
        minor grid style = {dashed,lightgray!75},
        yminorgrids=false,
        ylabel={$P_X(x)$},
        ytick={0,0.1,...,0.4},
        xtick={-7,-5,...,7},
        xticklabels=\empty,
        legend style = {at={(1,1)}, anchor=north east, font=\footnotesize, legend cell align=left, row sep=-0.5ex},
        ]
       
        \addplot[ycomb, color=red, mark=*, line width = 0.8pt, mark options={solid,scale=0.8,fill=white}] 
        coordinates{
            (5, 0.166)
            (2.27, 0.166)
            (0, 0.166)
            (-1.96, 0.166)
            (-3.6, 0.166)
            (-5, 0.166)
            };
        \addlegendentry{GS w/ $r_i^\star$}
    \end{axis}
\end{tikzpicture}
\end{subfigure}

\begin{subfigure}[b]{\columnwidth}
\begin{tikzpicture}
    \begin{axis}[
        width=1\columnwidth,
        height=1.25in,
        font=\footnotesize,
        xtick pos=bottom,
        ytick pos=left,
        ymin=0, ymax=0.39,
        xmin=-6, xmax=6,
        minor tick num=3,
        grid=both,
        major grid style = {solid,lightgray!75},
        minor grid style = {dashed,lightgray!75},
        yminorgrids=false,
        ylabel={$P_X(x)$},
        ytick={0,0.1,...,0.4},
        xtick={-7,-5,...,7},
        xticklabels=\empty,
        legend style = {at={(1,1)}, anchor=north east, font=\footnotesize, legend cell align=left, row sep=-0.5ex},
        ]
                          
        \addplot[ycomb, color=blue, mark=*, line width = 0.8pt, mark options={solid,scale=0.8,fill=white}] 
        coordinates{
            (5, 0.229)
            (3, 0.001)
            (1, 0.154)
            (-1, 0.162)
            (-3, 0.226)
            (-5, 0.230)
            };
        \addlegendentry{PS w/ $r_i^\star$}
    \end{axis}
\end{tikzpicture}
\end{subfigure}

\begin{subfigure}[b]{\columnwidth}
\begin{tikzpicture}
    \begin{axis}[
        width=1\columnwidth,
        height=1.25in,
        font=\footnotesize,
        xtick pos=bottom,
        ytick pos=left,
        ymin=0, ymax=0.39,
        xmin=-6, xmax=6,
        minor tick num=3,
        grid=both,
        major grid style = {solid,lightgray!75},
        minor grid style = {dashed,lightgray!75},
        yminorgrids=false,
        ylabel={$P_X(x)$},
        xlabel={PAM-6 constellation $\mathcal{X}$},
        ytick={0,0.1,...,0.4},
        xtick={-7,-5,...,7},
        legend style = {at={(1,1)}, anchor=north east, font=\footnotesize, legend cell align=left, row sep=-0.5ex},
        ]
                          
        \addplot[ycomb, color=blue, mark=diamond*, line width = 0.8pt, mark options={solid,scale=1.2,fill=white}] 
        coordinates{
            (5, 0.045)
            (3, 0.032)
            (1, 0.175)
            (-1, 0.244)
            (-3, 0.251)
            (-5, 0.252)
            };
        \addlegendentry{PS w/ $\hat{r}_i$}
    \end{axis}
\end{tikzpicture}
\end{subfigure}
    \vspace{-4mm}
    \caption{Optimal constellation $\mathcal{X}^\star$ found for GS (red) and input probability distribution $P_X(x)$ for PS (blue) at \mbox{$\mathrm{OMA}=0$} [dBm]. The top figure shows the uniform and equally-spaced case.}
    \label{fig:ser_pam4_ps_opt_inset}
\end{figure}
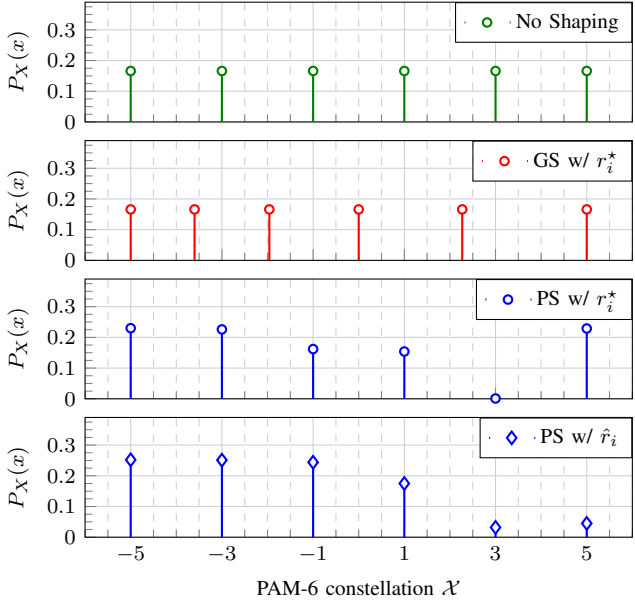

The red curve in Fig.~\ref{fig:ser_pam4_ps_opt} shows the SER results for \mbox{PAM-6} with GS. In this case, the input distribution $P_X(x)$ is still uniform, but the location of the constellation points is optimized. 
We optimize the locations $x_i$ for $i=0,1,\dotsc,M-1$ in order to minimize the SER at every OMA, i.e., we solve the optimization problem described as
\begin{equation} \label{eq:min_GS}
    \begin{split}
        \mathcal{X}^\star = \min_{\mathcal{X}}\quad\! &\mathrm{SER} \\
        \mathrm{s.t.}\quad\!  &x_0 = -(M-1) \quad \mbox{and} \quad x_{M-1} = +(M-1).
    \end{split}
\end{equation}
The points $x_0$ and $x_{M-1}$ are the outermost symbols of the constellation, and they are fixed at a constant value due to the peak-power constraint imposed by the IM \cite{che2023modulation,wiegart2020probabilistically}. 

The optimization problem from \eqref{eq:min_GS} is performed for the two different thresholds ($r_i^\star$ and $\hat{r}_i$), which will result in two optimal SER values for each case. We observe in Fig.~\ref{fig:ser_pam4_ps_opt} that by using GS (red curve), we can achieve a lower SER than uniform signaling (green curve) for the signal-dependent noise channel under consideration. This is consistent with the results from previous works where GS is employed to minimize the SER in presence of other types of signal-dependent noises. For example, in \cite{liang2023geometric} the signal-dependent noise source is the modulator chirp, whereas in \cite{zhou2022unequally} the main source of signal-dependent noise is the shot noise.

Moreover, the use of the approximate decision thresholds $\hat{r}_i$ (red diamonds) still achieves a SER very close to the optimal even with a nonequally-spaced constellation. The solid and dashed curves are almost overlapping in this case, as can be seen in the inset figure. Figure~\ref{fig:ser_pam4_ps_opt_inset} shows the uniform and equally-spaced constellation (green), and the optimized constellation obtained for $\mathrm{OMA}=0$ [dBm] (red). It can be seen that the innermost symbols from the GS optimized constellation $\mathcal{X}^\star$ are shifted slightly toward lower values with respect to that of the equally-spaced constellation. This results in the points having a lower noise variance in comparison, thus, increasing the noise tolerance of the shaped constellation.

Figure~\ref{fig:ser_pam4_ps_opt} also shows the SER results for PAM-6 with PS, i.e., when the input constellation is equally-spaced, but their probability distribution is optimized.
We optimize the distribution $P_X(x)$ in order to minimize the SER at every OMA, i.e., we solve the optimization problem described as 
\begin{equation} \label{eq:opt_PS}
    \begin{split}
       P_X(x)^\star = \min_{P_X(x)}\; & \mathrm{SER}\\
        \mathrm{s.t.}\quad\! & \sum_{x'\in\mathcal{X}} P_X(x') =1 \quad \mbox{and} \quad \mathbb{H}(X) \geq h_x,
    \end{split}
\end{equation}
where $\mathbb{H}(X)$ is the entropy of the constellation $X$. 
We set an entropy constraint to avoid that the optimal solution degenerates into a lower-order PAM format, i.e., zero or near-zero probability for any PAM-6 symbols. We choose a value of $h_x=2.30$, which is close to $\log_2(5)$ to avoid having more than one symbol with near zero probability. Similarly to the GS optimization, the optimization from \eqref{eq:opt_PS} is also performed for the two different thresholds ($\hat{r}_i$ and $r_i^\star$). The SER results for PS PAM-6 are shown with blue curves in Fig.~\ref{fig:ser_pam4_ps_opt}. 
Here, contrary to the uniform and GS cases, there is a significant difference between using the optimal and approximate decision thresholds.
Optimal threshold achieves almost two orders of magnitude lower SER than the approximate one, as shown by the black arrow. 
This is due to the fact that the approximate threshold \eqref{eq:map_approx} does not take into account the input distribution as opposed to the optimal expression from \eqref{eq:r_opt}. 
From Fig.~\ref{fig:ser_pam4_ps_opt_inset}, it can also be seen that the use of different thresholds also leads to different probability distributions. In the case of the approximate threshold, the two rightmost symbols have the lowest probability. In contrast, when using the optimal threshold the second to last symbol has the lowest probability, but the rightmost symbol has a much larger probability in comparison.

\begin{figure}[!t]
    \centering
    \definecolor{my_green}{rgb}{0.55, 0.71, 0.0}    
\definecolor{my_purple}{rgb}{0.5, 0, 0.5}
\definecolor{azure}{rgb}{0, 0.5, 0.5}

\pgfmathsetmacro{\lw}{0.8}    
\pgfmathsetmacro{\lww}{0.5}    

\newcommand{\blacksolid}{\raisebox{2pt}{\tikz{\draw[color=black,solid,line width=\lw,mark=o,mark options={solid}](0,0) -- (5mm,0);}}}   
\newcommand{\blacksdashed}{\raisebox{2pt}{\tikz{\draw[color=black,dashed,line width=\lw,mark=o,mark options={solid}](0,0) -- (5mm,0);}}}   
\newcommand{\blackdotted}{\raisebox{2pt}{\tikz{\draw[color=black,densely dotted,line width=\lw,mark=o,mark options={solid}](0,0) -- (5mm,0);}}}   

\newcommand{\blackdiamond}{\raisebox{0.5pt}{\tikz{\node[draw,line width=\lww,scale=0.4,diamond,color=black,fill=white](){};}}}
\newcommand{\blackcircle}{\raisebox{0.5pt}{\tikz{\node[draw,line width=\lww,scale=0.5,circle,color=black,fill=white](){};}}}

\newcommand{\colorsolid}[1]{\raisebox{2pt}{\tikz{\draw[color=#1,solid,line width=\lw,mark=o,mark options={solid}](0,0) -- (5mm,0);}}}

\pgfplotsset{colormap={bluered}{rgb=(1,0,0) rgb=(0,1,0) rgb=(0,0,1)}}

\pgfplotsset{colormap={custommap}{rgb=(1,0,0);rgb=(0,0,1);rgb=(0,0.5,0);rgb=(1,0.5,0);rgb=(0.5,0,0.5);rgb=(0,0.5,0.5)}}

\pgfplotscreateplotcyclelist{mylist}{
    {red},{blue},{green!50!black},{orange},{blue!50!red},{green!50!blue},{black,densely dotted}
}

\begin{tikzpicture}
    \begin{semilogyaxis}[
    width=0.98\columnwidth,  
    height=2.8in,
    font=\footnotesize,
    xmin=-15, xmax=6,
    ymin=1.9e-4, ymax=1,
    xlabel={OMA [dBm]},
    ylabel={Symbol Error Rate},
    ytick pos=left,
    xtick pos=bottom,
    grid=major,
    grid style = {solid,lightgray!75},
    legend style = {at={(0.985,0.99)}, anchor=north east, font=\footnotesize, legend cell align=left, row sep=-0.5ex},
    xtick={-15,-12,...,10},
    ytickten={-4,-3,...,0},
    cycle list name=mylist,
    ]
    
    \pgfplotstableread{FigTikz/data_txt/PAM8_PS_200G_RIN-140_2.txt}\datatable
    \pgfplotstableread{FigTikz/data_txt/SER_PAM_comparison_PS_MI_MC.txt}\datatableMC
    \pgfplotstableread{FigTikz/data_txt/SER_PAM_comparison_PS_MI_MC_extra.txt}\datatableMCC
    
    \foreach \col in {13,...,8} {
        %

        
        \addplot+[line width=\lw] table[x index=0, y index=\col] {\datatable};
    }
    \addplot+[line width=\lw] table[x index=0, y index=14] {\datatable};
    
    \foreach \col in {20,...,15} {
        %

        
        \addplot+[line width=\lw, dashed] table[x index=0, y index=\col] {\datatable};
    }

    \pgfplotsset{cycle list shift=0}
    \addplot+ coordinates{(nan,nan)};
    \foreach \col in {6,5,...,1}{
        \addplot+[only marks, solid, line width=\lww, mark=diamond*, mark options={solid,scale=1,fill=white}] table[x index=0, y index=\col] {\datatableMCC};
    }
    
    \addplot+ coordinates{(nan,nan)};
    \foreach \col in {11,9,7,5,3,1}{
        \addplot+[only marks, solid, line width=\lww, mark=*, mark options={solid,scale=0.8,fill=white}] table[x index=0, y index=\col] {\datatableMC};
    }

    \node [draw,font=\footnotesize,fill=white,anchor= south west] at (rel axis cs:0.01,0.015) {\shortstack[l]{
        \blacksolid \hspace{0.01cm} Exact \eqref{eq:r_opt}, Theo. \ref{theo:optimal}\\
        \blacksdashed \hspace{0.01cm} Approx. \eqref{eq:map_approx}, \cite{agrawal2012fiber}\\
        \hspace{0.08cm} \blackcircle \hspace{0.17cm} Monte Carlo~\eqref{eq:map_argmax0}\\
        \hspace{0.08cm} \blackdiamond \hspace{0.17cm} Monte Carlo \eqref{eq:map_approx}
    }};

    \node [draw,font=\footnotesize,fill=white,anchor= north east] at (rel axis cs:0.99,0.99) {\shortstack[l]{
        \colorsolid{red} \hspace{0.01cm} PAM-8@3.0\\
        \colorsolid{blue} \hspace{0.01cm} PAM-8@2.8\\
        \colorsolid{green!50!black} \hspace{0.01cm} PAM-8@2.6\\
        \colorsolid{orange} \hspace{0.01cm} PAM-8@2.4\\
        \colorsolid{blue!50!red} \hspace{0.01cm} PAM-8@2.2\\
        \colorsolid{green!50!blue} \hspace{0.01cm} PAM-8@2.0\\
        \blackdotted \hspace{0.1cm} PAM-4@2.0
    }};
    
    \end{semilogyaxis}

\end{tikzpicture}
    \caption{Symbol error rate for PS PAM-8 for different $\mathbb{H}(X)$, calculated for the two different thresholds $r^\star$ and $\hat{r}$.}
    \label{fig:PAM_PS_MI}
\end{figure}

\subsection{Mutual Information}

In most of the works where PS is employed, the objective function or the optimization problem is the mutual information $I(X;Y)$ \cite{wiegart2020probabilistically}. Hence, here we analyze the case where the objective function is $I(X;Y)$ with channel transition probability $p_{Y|X}(y|x_i)$ in \eqref{eq:cond_pdf}. The optimization problem is now defined as
\begin{equation} \label{eq:MI_PS}
    \begin{split}
        P_X(x)^\star=\max_{P_X(x)}\; & I(X;Y)\\
        \mathrm{s.t.}\quad\! & \sum_{x'\in\mathcal{X}} P_X(x') =1.
    \end{split}
\end{equation}
For this new scenario, we consider a PAM-$8$ with PS, i.e., we optimize the input distribution $P_X(x)$ according to \eqref{eq:MI_PS}. 

The gains in mutual information for peak-power constrained IM-DD systems are known to be small (see, e.g., \cite{wiegart2020probabilistically} and \cite{zou2024evaluation}). However, here we aim to illustrate the accuracy of the SER calculation for a probabilistically-shaped IM-DD system, where the input distribution was optimized using a different cost function (the mutual information). From the optimized input distributions, we select 6 cases with different entropies $\mathbb{H}(X)$ ranging from $2.0$ to $3.0$, and we calculate SER for the two different thresholds expressions. Uniform and equally-spaced PAM-4 results are shown in Fig.~\ref{fig:PAM_PS_MI} as a reference (dotted black line). We denote the PS PAM-8 as PAM-$8$@$\mathbb{H}(X)$. The results in Fig.~\ref{fig:PAM_PS_MI} show again that there is a significant difference between the two thresholds. For the approximate thresholds $\hat{r}_i$, the SER curves (colored, dashed) are all grouped and result in values very close to the uniform distribution (PAM-$8$@$3.0$). Instead, for the optimal thresholds $r_i^\star$, the SER curves (solid) are distinguishable from each other and perfectly match Monte Carlo simulations. Therefore, these results show when employing PS, it is critical to use the optimized threshold $r^\star$ in order to obtain an accurate calculation of the SER. This is especially important in high data rate applications, where the OMA region of operation is high \cite{zhou2019beyond}.

\section{Conclusions}
\label{sec:conclusions}

We have provided a general analytical expression to calculate the SER for an IM-DD system that is subject to signal-dependent laser intensity noise. The developed expression is novel and makes no assumptions about the locations or distribution of the constellation points, making it compatible with GS and PS. Monte Carlo simulations validated the expression for different shaping scenarios and modulation formats. When comparing the presented SER expression with the one using an approximate threshold from the literature, we observed that using our proposed expression is critical when PS is employed. The usage of our expression results in an accurate calculation of the SER under PS, whereas the approximated thresholds resulted in an overestimation of the SER. On the other hand, the difference with the approximation was negligible for the case when GS is employed, and for the case when the constellation is equally-spaced and equiprobable. 
Future work includes extending the analysis in this paper to information rate maximization with probabilistic shaping. In particular, to coded systems using hard-decision FEC codes, which depend explicitly on the decision thresholds.

\appendices
\section{Derivation of \eqref{eq:Y_k}} \label{app:A}
From \eqref{eq:Y_k_prev} we have that the two terms of interest to expand are $\tilde{P}_{\mathrm{tx}}(kT)$ and $\mathbb{E}[\tilde{P}_{\mathrm{tx}}(kT)]$. For the first one we have
\begin{align}
    \tilde{P}_{\mathrm{tx}}(kT) &\overset{\eqref{eq:P_tx1}}{=} \tilde{P}_{\mathrm{cw}}(kT)f(u(kT))\\ 
    &\overset{\eqref{eq:U_k}}{=} \tilde{P}_{\mathrm{cw}}(kT) f\left(\sum_{\ell=0}^\infty U_\ell p(kT-\ell T)\right)\\ \label{eq:f_uk}
    &=\tilde{P}_{\mathrm{cw}}(kT)f(U_k)\\ \label{eq:f_gX}
    &=\tilde{P}_{\mathrm{cw}}(kT)f(g(X_k))\\
    &\overset{\eqref{eq:u_k}}{=}\tilde{P}_{\mathrm{cw}}(kT)f\left(f^{-1}\left(\frac{\eta}{\Pcw}(X_k+\beta)\right)\right)\\ \label{eq:P_tilde}
    &=\tilde{P}_{\mathrm{cw}}(kT)\frac{\eta}{\Pcw}(X_k+\beta)\\ \label{eq:P_tilde2}
    &=\Pcw(1+\tilde{n}_\rin(kT))\left(\frac{\eta}{\Pcw}(X_k+\beta)\right)\\ \label{eq:Ptx_kT}
    &= \eta(1+\tilde{n}_\rin(kT))(X_k+\beta),
\end{align}
where to pass from \eqref{eq:f_uk} to \eqref{eq:f_gX}, we use the definition \mbox{$U_k=g(X_k)$} from Sec.~\ref{sec:transmitter}, and to pass from \eqref{eq:P_tilde} to \eqref{eq:P_tilde2}, we use the filtered version of \eqref{eq:Pcw}. Also, \eqref{eq:f_uk} follows from the fact that $p(kT-\ell T)\neq 0$ iff $\ell=k$ for raised cosine pulses. 
Finally, using \eqref{eq:Ptx_kT} in \eqref{eq:Y_k}, we obtain \eqref{eq:Y_k_prev}.


\section{Proof of Theorem \ref{theo:optimal}} \label{app:B}
Using \eqref{eq:cond_pdf} in \eqref{eq:MAP_eq0} results in
\begin{equation} \label{eq:proof_0}
    \frac{P_X(x_i)}{\sigma_i}\mathrm{exp}\left(-\frac{(r_i-x_i)^2}{2\sigma_i^2}\right) = \frac{P_X(x_j)}{\sigma_j}\mathrm{exp}\left(-\frac{(r_i-x_j)^2}{2\sigma_j^2}\right).
\end{equation}
Using $\log(\cdot)$ in \eqref{eq:proof_0} the expression can be rearranged into
\begin{equation} \label{eq:r_quad_eq}
    \frac{(r_i-x_i)^2}{\sigma_i^2} = \frac{(r_i-x_j)^2}{\sigma_j^2} + 2\log\left(\frac{P_X(x_i)}{P_X(x_j)}\frac{\sigma_j}{\sigma_i}\right).
\end{equation}
Equation \eqref{eq:r_quad_eq} is a quadratic equation in terms of $r_i$, namely, $a r_i^2 +2b r_i + c - d = 0$, where
\begin{equation} \label{eq:r_abcd}
    \begin{split}
        \underbrace{(\sigma_j^2-\sigma_i^2)}_{=a}r_i^2 + 2&\underbrace{(\sigma_i^2 x_j - \sigma_j^2 x_i)}_{=b}r_i + \underbrace{\sigma_j^2 x_i^2 - \sigma_i^2 x_j^2}_{=c}\\
        & -\underbrace{2\sigma_i^2\sigma_j^2\log\left(\frac{P_X(x_i)}{P_X(x_j)}\frac{\sigma_j}{\sigma_i}\right)}_{=d} = 0.
    \end{split}
\end{equation}
The proof is completed by using $r_i^\star = \frac{-b + \sqrt{b^2 - a(c - d)}}{a}$ as the solution of \eqref{eq:r_abcd}.

\bibliographystyle{IEEEtran}
\bibliography{references}


\end{document}